\newtheorem{theorem}{Theorem}
\newtheorem{lemma}{Lemma}
\newcommand\coolover[2]{\mathrlap{\smash{\overbrace{\phantom{%
    \begin{matrix} #2 \end{matrix}}}^{\mbox{$#1$}}}}#2}
\newcommand\coolrightbrace[2]{%
\left.\vphantom{\begin{matrix} #1 \end{matrix}}\right\}#2}
\newcommand{\Ao}{\mathcal{A}_{\mathrm{o}}} 
\newcommand{\Po}{\mathbf{P}_{\mathrm{o}}} 
\newcommand{\Zo}{\mathbf{Z}_\mathrm{o}} 
\definecolor{azure}{rgb}{0.0, 0.5, 1.0}
\definecolor{violet}{rgb}{0.58, 0.0, 0.83}
\begin{document}
\title{In-sector Compressive Beam Alignment \\ for  MmWave and THz Radios}
\author{Hamed Masoumi, Michel Verhaegen and Nitin Jonathan Myers
\thanks{Hamed Masoumi (h.masoumi@tudelft.nl), Michel Verhaegen (m.verhaegen@tudelft.nl) and Nitin Jonathan Myers (n.j.myers@tudelft.nl) are with Delft Center for Systems and Control,
Delft University of Technology, 2628 Delft, The Netherlands. The material in this paper appeared in part in proceedings of the IEEE Signal Processing Advances in Wireless Communications (SPAWC) 2022 \cite{masoumi2022structured} and the IEEE Statistical Signal Processing (SSP) 2023 workshop \cite{masoumi2023analysis}.}}

\markboth{For submission to IEEE Trans. Wireless Commun. v1}%
{Shell \MakeLowercase{\textit{et al.}}: Bare Demo of IEEEtran.cls for IEEE Journals}

\maketitle

\begin{abstract}
Beam alignment is key in enabling millimeter wave and terahertz radios to achieve their capacity. Due to the use of large arrays in these systems, the common exhaustive beam scanning results in a substantial training overhead. Prior work has addressed this issue, by developing compressive sensing (CS)-based methods which exploit channel sparsity to achieve faster beam alignment. Unfortunately, standard CS techniques employ wide beams and suffer from a low signal-to-noise ratio (SNR) in the channel measurements. To solve this challenge, we develop an IEEE 802.11ad/ay compatible technique that takes an in-sector approach for CS. In our method, the angle domain channel is partitioned into several sectors, and the channel within the best sector is estimated for beam alignment. The essence of our framework lies in the construction of a low-resolution beam codebook to identify the best sector and in the design of the CS matrix for in-sector channel estimation. Our beam codebook illuminates distinct non-overlapping sectors and can be realized with low-resolution phased arrays. We show that the proposed codebook results in a higher received SNR than the state-of-the-art sector sweep codebooks. Furthermore, our optimized CS matrix achieves a better in-sector channel reconstruction than comparable benchmarks.
\end{abstract}
\IEEEpeerreviewmaketitle

\textit{Index Terms}\textemdash Compressed sensing, mm-Wave, THz, beamforming, phased array, sector sweep.
\section{Introduction}
\par \lettrine{M}{illimeter} wave (mmWave) and terahertz (THz) radios will be an integral part of 6G, as they can exploit wide bandwidths and achieve high beamforming gain with large antenna arrays \cite{rappaport2019wireless,sarieddeen2021overview}. 
Finding the optimal beam at the radios through classical channel estimation or beam scanning is challenging in these systems, due to the use of large arrays and hardware constraints. For instance, mmWave radios usually employ phased arrays equipped with a single radio-frequency chain, to achieve a cost-effective and power-efficient architecture \cite{roberts2021millimeter,xiao2016hierarchical}. These arrays acquire fewer spatial channel measurements than fully digital arrays, thereby restricting the application of conventional multiple-input multiple-output channel estimation methods. Furthermore, the acquired measurements are projections of the channel on antenna weight matrices whose entries are constrained to a small alphabet. The size of this alphabet is limited due to the limited resolution of the phase shifters. 

\par An alternative to beam scanning is hierarchical beam search wherein beams with decreasing beamwidths are sequentially applied. At the end of the search process, a narrow beam that results in the highest received SNR is used for communication \cite{xiao2016hierarchical}. 
To implement hierarchical search, the design objective is to construct hardware compatible beams that focus the transmitter's energy on a section of the channel in the angle domain. The sections are referred to as \textit{sectors}, where each sector represents a group of directions in the angle domain, also called the beamspace \cite{brady2013beamspace}. The process of identifying the best sector is called sector level sweep (SLS). Prior work has designed codebooks for SLS with high-resolution phased arrays \cite{xiao2016hierarchical,song2017multiresolution,xiao2018enhanced} and low-resolution phased arrays \cite{raviteja2017analog,tsai2018structured}. The techniques in \cite{xiao2016hierarchical,song2017multiresolution,xiao2018enhanced, raviteja2017analog, tsai2018structured} construct sectors that cover a contiguous set of directions. In this paper, we design a different codebook wherein the sectors cover a discontiguous set of directions.  

\par In IEEE 802.11ad/ay devices, the transmitter discerns the \textit{sector of interest} by finding the sector that results in the highest received power. Then, the beam refinement protocol (BRP) can be used to obtain channel measurements within the sector of interest and subsequently perform beam alignment. Unfortunately, exhaustive beam scanning within this sector can still result in a huge overhead in typical mmWave or THz radios. To address this problem, prior work has exploited the sparse or low-rank characteristics of the angle domain channel \cite{rangan2017rank,alkhateeb2015compressed} for sub-Nyquist channel estimation. 
Most of these techniques \cite{myers2019falp,Marzi2016Compressive,Vlachos2018massive, ma2020sparse}, however, employ quasi-omnidirectional beams to acquire spatial channel measurements. The use of such wide beams results in a poor signal-to-noise ratio (SNR) in the channel measurements, causing these methods to fail in practice \cite{tsai2018structured}. Therefore, beams that focus energy only within the sector of interest must be developed for ``in-sector'' channel estimation within BRP.  
\par We now discuss prior work \cite{tsai2018structured,wang2021jittering,ali2017millimeter,chen2020convolutional,chen2022hybrid} on sparsity-aware channel estimation within a sector. In \cite{tsai2018structured}, randomly selected discrete Fourier transform (DFT) columns were used to modulate a spread sequence and generate different beams to obtain channel measurements in different sectors. In \cite{wang2021jittering}, a contiguous band of directions is illuminated by sub-arrays. Then, the beam at each sub-array is phase modulated to construct an ensemble of in-sector beams. In \cite{ali2017millimeter}, the beams were sampled from a large set of random codes, based on their capability to focus energy on a sector of interest. We will show, by simulations, that the techniques in \cite{wang2021jittering, tsai2018structured, ali2017millimeter} result in poor in-sector channel estimates than our method. In \cite{chen2020convolutional}, linear convolution at the output of a fully digital array is performed and the steady-state output of the digital array is decimated to reduce the number of measurements and achieve parallel processing. 
Our method, unlike \cite{chen2020convolutional}, performs circular convolution that preserves the channel dimension and obtains compressive measurements of the channel without needing to discard measurements.
\par In this paper, we propose a new approach for in-sector channel estimation by solving two main challenges. First, how to design beams that focus the transmitted energy on distinct non-overlapping sectors within the beamspace. We will see that this problem is extremely challenging with low-resolution phased arrays. Second, how to optimize the compressed sensing (CS) matrices to achieve low CS aliasing artifacts in the in-sector channel estimate. We solve both these problems by constructing new in-sector beams, such that the corresponding antenna weight matrices (AWMs) can be applied in low-resolution phased arrays. These AWMs, referred to as base AWMs, each focus power on a distinct sector. In our method, the channel measurements for CS within the sector of interest are then obtained during BRP by circularly shifting the base AWM. Our main contributions are listed below.
\begin{itemize}
    \item 
    We propose a beam codebook for SLS that results in a higher received power than existing codebooks. The designed beams have a comb-like structure and illuminate distinct non-overlapping sectors in the beamspace. The AWMs associated with our beams can be realized with phase shifters with a resolution that scales logarithmically with the number of sectors. 
    \item We optimize the sequence of circular shifts applied to the base AWM to acquire channel measurements within the sector of interest. We show that the optimized shifts result in lower aliasing artifacts in CS when compared to the use of random circular shifts.
    \item We derive support recovery and in-sector channel reconstruction guarantees for CS using the orthogonal matching pursuit (OMP) algorithm. Our guarantees quantify how imperfections in the designed sectors impact the in-sector channel estimate. We show, by simulations, that our in-sector CS technique results in a lower channel reconstruction error and a higher achievable rate than comparable benchmarks.
\end{itemize}
\par The techniques in \cite{myers2019falp,tsai2018structured,myers2018spatial} and our method fall within the umbrella of convolutional compressed sensing (CCS) \cite{krahmer2014structured}, where circular shifts of a code are employed to acquire CS measurements. We now discuss how our paper distinguishes itself from prior work on CCS-based channel estimation or beam alignment. Prior work in \cite{myers2019falp} and \cite{myers2018spatial} employed perfect binary arrays and Zadoff-Chu codes as base AWMs, and used random circular shifts of these AWMs to acquire CS measurements. The AWMs in \cite{myers2018spatial,myers2019falp} generate quasi-omnidirectional beams, which result in poor SNR in the channel measurements. Furthermore, the random circular shifts used in \cite{myers2019falp, myers2018spatial} result in uniformly spread aliasing artifacts, which is also undesirable when estimating the channel within a sector. 
\par In our recent paper \cite{masoumi2022structured}, we developed an in-sector CCS-based channel estimation technique, assuming a uniform linear array at the transmitter. We assumed infinite resolution amplitude and phase control in \cite{masoumi2022structured} to construct contiguous sectors. We also optimized the circular shifts to push aliasing artifacts outside the sector of interest. Our solution in \cite{masoumi2022structured}, however, cannot be realized in practical phased arrays which require the AWM entries to be unit modulus complex exponentials with a quantized phase. In this paper, we address the practical limitations in \cite{masoumi2022structured} to propose a new class of discontiguous sectors and a new set of circular shifts.
\par \textbf{Notation}: $a$, $\mathbf{a}$ and $\mathbf{A}$ denote a scalar, vector, and a matrix. The indexing of vectors and matrices begins at $0$. $\mathbf{a}_i$ is the $i^{\mathrm{th}}$ column of $\mathbf{A}$. We denote the $(i,j)^{\mathrm{th}}$ entry of $\mathbf{A}$ by $A_{ij}$ or $A(i,j)$. The $\ell_2$ norm of $\mathbf{a}$ is denoted by $\Vert \mathbf{a} \Vert_2$. $|\mathbf{A}|$ is a matrix that contains the element-wise magnitudes of $\mathbf{A}$. The Frobenius norm of $\mathbf{A}$ is denoted by $\|\mathbf{A}\|_{\mathrm{F}}$. For an integer $N$, we define the set $[N]=\{0,\hdots,N-1\}$. Also, $\langle i\rangle_{N}$ is the modulo-$N$ remainder of $i$. We use 
$(\cdot)^{\mathrm{T}}$, $(\cdot)^{\mathrm{c}}$ and $(\cdot)^{\ast}$ to denote the transpose, conjugate and conjugate-transpose operators. We define the vector version of $\mathbf{A}$ as $\mathrm{vec}(\mathbf{A})=\left[\mathbf{a}_0^{\mathrm{T}},\cdots,\mathbf{a}_{N-1}^{\mathrm{T}}\right]^{\mathrm{T}}$. $\mathrm{Diag}(\mathbf{a})$ is a diagonal matrix with $\mathbf{a}$ on the diagonal. We use $\circledast$, $\otimes$, and $\odot$ to denote the circular convolution, the Kronecker product, and the Hadamard product. The inner product of $\mathbf{A}$ and $\mathbf{B}$ is $\langle \mathbf{A},\mathbf{B}\rangle =\sum_{i,j}\mathbf{A}_{ij}{\mathbf{B}^{\text{c}}_{ij}}$. $\mathcal{CN}(0,\sigma^{2})$ is the zero-mean complex Gaussian distribution with variance $\sigma^{2}$. $\mathsf{j}=\sqrt{-1}$. 
\section{Channel and system model}\label{sec2}
In this section, we consider a narrowband point-to-point wireless system to explain the transmit beam alignment problem. In section \ref{sec5sims}, we extend our sparse in-sector channel estimation-aided beam alignment technique to a wideband scenario using the IEEE 802.11ad frame structure.
\subsection{Channel model}\label{sec2_1sysmodel}
We consider an $N\times N$ half-wavelength spaced uniform planar array (UPA) at the transmitter (TX) and a single antenna receiver (RX) as shown in Fig.~\ref{fig_1:sysmdl}\textcolor{red}{(a)}. We assume isotropic antenna elements at the TX and the RX, for simplicity. Our solution can also be applied to other array geometries, by using appropriate array response vectors in the channel model.

\par The narrowband multiple-input single-output (MISO) channel between the TX and RX is modeled as an $N\times N$ matrix $\mathbf{H}$, the matrix representation of the $N^2\times 1$ vectorized channel. The channel comprises $L$ propagation rays, where the $\ell^{\mathrm{th}}$ ray has a complex gain of $\beta_\ell$, an azimuth angle-of-departure (AoD) $\theta_{a,\ell}$ and an elevation AoD $\theta_{e,\ell}$. By defining the beamspace angles as $\omega_{a,\ell} = \pi\sin{\theta_{e,\ell}}\sin{\theta_{a,\ell}}$, $\omega_{e,\ell} = \pi\sin{\theta_{e,\ell}}\cos{\theta_{a,\ell}}$ \cite{tse2005fundamentals} and the $N\times 1$ Vandermonde vector $\mathbf{a}_{N}(\omega)$ as
\begin{equation}\label{eqn:vandrmond}
    \mathbf{a}_{N}(\omega)=\left[1, e^{\mathrm{j} \omega}, e^{\mathrm{j} 2 \omega}, \cdots, e^{\mathrm{j}(N-1) \omega}\right]^\mathrm{T},
\end{equation}
the baseband channel matrix $\mathbf{H}$ is given by
\begin{equation}\label{eqn:channel}
    \mathbf{H}=\sum_{\ell=1}^{L} \beta_\ell \mathbf{a}_{N}\left(\omega_{e, \ell}\right) \mathbf{a}_{N}^\mathrm{T}\left(\omega_{a, \ell}\right).
\end{equation}
The channel dimension $N^2$ can be in the order of hundreds to thousands in typical mmWave or THz access points. 
\par The channel $\mathbf{H}$ is approximately sparse in the angle domain due to high scattering at mmWave and THz wavelengths \cite{sarieddeen2021overview}. To exploit this property during channel reconstruction, $\mathbf{H}$ is expressed in the beamspace (angle domain) where it has a sparse representation. Since we assume a UPA at the TX, the 2D-DFT dictionary is used to represent $\mathbf{H}$ in the beamspace. We use $\mathbf{U}_{N}$ to denote the standard $N\times N$ unitary DFT matrix and $\mathbf{X}$ to denote the beamspace representation of $\mathbf{H}$. Then, $\mathbf{X}$  and $\mathbf{H}$ are related as
\begin{equation}\label{eqn:beam_H}
    \mathbf{H} = \mathbf{U}_{N}\mathbf{X}\mathbf{U}_{N}.
\end{equation}
In our analysis, we assume that $\mathbf{X}$ is exactly sparse, i.e., the beamspace angles are exactly aligned with any of the $N^2$ directional 2D-DFT beam directions. In our simulations, we use channels obtained from the NYU channel simulator \cite{sun2017novel} where $\mathbf{X}$ is only approximately sparse.
\begin{figure}[t]
\centering
\subfloat[A comb-like beam pattern.]{\begin{tikzpicture}[scale=0.36]
\node[draw=none,fill=none] at (-1,0){\includegraphics[trim=6cm 5.85cm 6.8cm 3.8cm, clip, width=0.45\textwidth]{./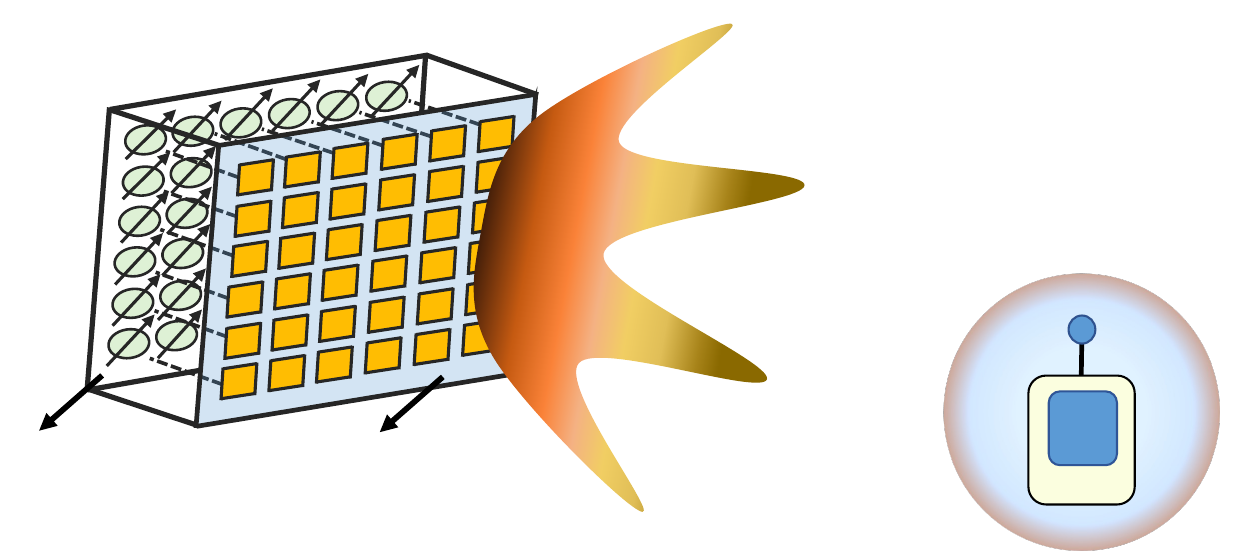}};
\node[draw=none,fill=none] at (-12,-3) {\scriptsize Phase shift matrix $\mathbf{P}[m]$};
\node[draw=none,fill=none] at (-4.5,-3) {\scriptsize UPA of antennas};
\node[draw=none,fill=none] at (-4.15,4.3) {\footnotesize\textbf{TX}};
\node[draw=none,fill=none] at (6.7,0.7) {\footnotesize\textbf{RX}};
\node[draw=none,fill=none] at (8.7,0.2) {\footnotesize $y[m]$};
\end{tikzpicture}
}\label{fig_1a:sysmdl}
\hfil
\subfloat[Example of $\!4\!$ comb-like sectors.]{\begin{tikzpicture}[scale=0.37, every node/.style={minimum size=0.37cm-\pgflinewidth, outer sep=0pt}]
\node at (1, 3.37)   (a) {\scriptsize Sector $\mathcal{A}_2$};
\draw[step=1cm,black,thick] (-1,-1) grid (3,3);
\node[fill=teal!70] at (-0.5,1.5) {};
\node[fill=teal!70] at (1.5,1.5) {};
\node[fill=teal!70] at (-0.5,-0.5) {};
\node[fill=teal!70] at (1.5,-0.5) {};
\node at (-1.25,-0.5) {\scalebox{.43}{$3$}};
\node at (-1.25,0.5) {\scalebox{.43}{$2$}};
\node at (-1.25,1.5) {\scalebox{.43}{$1$}};
\node at (-1.25,2.5) {\scalebox{.43}{$0$}};
\node at (-0.5,-1.25) {\scalebox{.43}{$0$}};
\node at (0.5,-1.25) {\scalebox{.43}{$1$}};
\node at (1.5,-1.25) {\scalebox{.43}{$2$}};
\node at (2.5,-1.25) {\scalebox{.43}{$3$}};
\node[rotate=90] at (-1.7, 1)   (a) {\scalebox{.55}{Beamspace rows}};
\node at (1, -1.6)   (a) {\scalebox{.55}{Beamspace columns}};
\node at (8, 3.37)   (a) {\scriptsize Sector $\mathcal{A}_3$};
\draw[step=1cm,black,thick] (6,-1) grid (10,3);
\node[fill=teal!70] at (7.5,1.5) {};
\node[fill=teal!70] at (9.5,1.5) {};
\node[fill=teal!70] at (7.5,-0.5) {};
\node[fill=teal!70] at (9.5,-0.5) {};
\node at (5.75,-0.5) {\scalebox{.43}{$3$}};
\node at (5.75,0.5) {\scalebox{.43}{$2$}};
\node at (5.75,1.5) {\scalebox{.43}{$1$}};
\node at (5.75,2.5) {\scalebox{.43}{$0$}};
\node at (6.5,-1.25) {\scalebox{.43}{$0$}};
\node at (7.5,-1.25) {\scalebox{.43}{$1$}};
\node at (8.5,-1.25) {\scalebox{.43}{$2$}};
\node at (9.5,-1.25) {\scalebox{.43}{$3$}};
\node[rotate=90] at (5.3, 1)   (a) {\scalebox{.55}{Beamspace rows}};
\node at (8, -1.6)   (a) {\scalebox{.55}{Beamspace columns}};
\node at (1, 9.37)   (a) {\scriptsize Sector $\mathcal{A}_0$};
\draw[step=1cm,black,thick] (-1,5) grid (3,9);
\node[fill=teal!70] at (-0.5,8.5) {};
\node[fill=teal!70] at (1.5,8.5) {};
\node[fill=teal!70] at (-0.5,6.5) {};
\node[fill=teal!70] at (1.5,6.5) {};
\node at (-1.25,5.5) {\scalebox{.43}{$3$}};
\node at (-1.25,6.5) {\scalebox{.43}{$2$}};
\node at (-1.25,7.5) {\scalebox{.43}{$1$}};
\node at (-1.25,8.5) {\scalebox{.43}{$0$}};
\node at (-0.5,4.75) {\scalebox{.43}{$0$}};
\node at (0.5,4.75) {\scalebox{.43}{$1$}};
\node at (1.5,4.75) {\scalebox{.43}{$2$}};
\node at (2.5,4.75) {\scalebox{.43}{$3$}};
\node[rotate=90] at (-1.7, 7)   (a) {\scalebox{.55}{Beamspace rows}};
\node at (1, 4.3)   (a) {\scalebox{.55}{Beamspace columns}};
\node at (8, 9.37)   (a) {\scriptsize Sector $\mathcal{A}_1$};
\draw[step=1cm,black,thick] (6,5) grid (10,9);
\node[fill=teal!70] at (7.5,8.5) {};
\node[fill=teal!70] at (9.5,8.5) {};
\node[fill=teal!70] at (7.5,6.5) {};
\node[fill=teal!70] at (9.5,6.5) {};
\node at (5.75,5.5) {\scalebox{.43}{$3$}};
\node at (5.75,6.5) {\scalebox{.43}{$2$}};
\node at (5.75,7.5) {\scalebox{.43}{$1$}};
\node at (5.75,8.5) {\scalebox{.43}{$0$}};
\node at (6.5,4.75) {\scalebox{.43}{$0$}};
\node at (7.5,4.75) {\scalebox{.43}{$1$}};
\node at (8.5,4.75) {\scalebox{.43}{$2$}};
\node at (9.5,4.75) {\scalebox{.43}{$3$}};
\node[rotate=90] at (5.3, 7)   (a) {\scalebox{.55}{Beamspace rows}};
\node at (8, 4.3)   (a) {\scalebox{.55}{Beamspace columns}};
\end{tikzpicture}}
\caption{\small An mmWave system with a UPA at the TX employing comb-like beams designed in this work. Each comb-like beam illuminates a beamspace sector shown in (b). In this paper, the sector that results in the highest received power is identified and then the channel within that sector is estimated for beam alignment. \normalsize}\label{fig_1:sysmdl} 
\end{figure}
\subsection{System model}\label{sec2B}
\par The TX uses a phased array wherein the antennas are connected to a single radio frequency chain through a $q$-bit phase shifter. The set of the possible weights at each antenna is $\mathbb{Q}_{q} = \{e^{\mathsf{j}2\pi\ell/2^{q}}/N:\ell\in [2^q]\}$. The TX can thus apply AWMs to its phased array, which are constrained to be within $\mathbb{Q}_q^{N\times N}$. In this paper, the AWMs are designed such that their corresponding beams illuminate only a certain set of indices of the beamspace $\mathbf{X}$. A \textit{sector} is defined as a group of indices of the beamspace that are illuminated by a beam. We use $S$ to denote the number of sectors used to partition the $N\times N$ beamspace grid into disjoint subsets $\mathcal{A}_s,~\forall s\in[S]$. The union of $\{\mathcal{A}_s\}_{s=0}^{S-1}$, i.e., $[N]\times [N]$, spans the entire beamspace. An example of the sectors $\{\mathcal{A}_s\}_{s=0}^{S-1}$ with our comb-like construction proposed in Sec. \ref{sec3_sls_awm} is shown in Fig. \ref{fig_1:sysmdl}\textcolor{red}{(b)} for $S=4$. 
\par We now explain how our in-sector CS-based beam alignment method can be integrated into \textit{SLS} and \textit{BRP} of the beam training procedure in IEEE 802.11ad/ay \cite{nitsche2014ieee} standards.
\subsubsection{Sector level sweep}
During SLS, the TX uses sector sweep frames and illuminates the channel with sectored beam patterns. The standard, however, does not prescribe specific beam patterns to conduct SLS. These beams may be the common contiguous patterns discussed in \cite{tsai2018structured,xiao2018enhanced,raviteja2017analog} or the comb-like patterns proposed in this work as shown in Fig.~\ref{fig_1:sysmdl}\textcolor{red}{(b)}. We define $\{\mathbf{P}_s\}_{s=0}^{S-1}$ as the base AWMs applied at the TX for SLS. 
We use $y^{\mathrm{sls}}_s$ to denote the measurement acquired when the TX applies $\mathbf{P}_s$ and $v_s\sim\mathcal{CN}(0,\sigma^2)$ as the noise in this measurement, i.e.,
\begin{equation}\label{eqn:rx_sls}
    y^{\mathrm{sls}}_s = \left\langle \mathbf{H},\mathbf{P}_s \right\rangle + v_s.
\end{equation}
 In SLS, the RX calculates the received power associated with \eqref{eqn:rx_sls} for each sector. Then, it determines the sector of interest as the one with the highest received power using $s_{\mathrm{opt}} = \underset{s\in[S]}{\mathrm{argmax}}~|y^{\mathrm{sls}}_s|^2$. An illustration of the SLS procedure is shown in Fig.~\ref{fig:sls}. 
 \par We use $\Po\in\{\mathbf{P}_s\}_{s=0}^{S-1}$ to denote the AWM associated with $s_{\mathrm{opt}}$ and $\Ao$ to denote the best sector corresponding to $\Po$. To see this correspondence, observe that the inner product of two matrices is the same as the inner product of their inverse 2D-DFTs, i.e., $\left\langle \mathbf{H},\mathbf{P}_s \right\rangle=\left\langle \mathbf{X},\mathbf{U}^{\ast}_N \mathbf{P}_s \mathbf{U}^{\ast}_N \right\rangle$. As a result, the TX generates a beam that illuminates the beamspace indices at which $\mathbf{U}^{\ast}_N \mathbf{P}_s \mathbf{U}^{\ast}_N$ is non-zero, when it applies $\mathbf{P}_s$. Therefore, $\mathcal{A}_s=\{(k,\ell): |(\mathbf{U}^{\ast}_N \mathbf{P}_s \mathbf{U}^{\ast}_N)_{k,\ell}\neq 0|\}$. Our objective in codebook design for SLS is to design $\{\mathbf{P}_s\}_{s=0}^{S-1}$ such that each of them illuminates a distinct set of indices in the beamspace, i.e., $\{\mathcal{A}_s\}_{s=0}^{S-1}$ are disjoint and their union is the entire beamspace. The main challenge for this design is due to the constraint that $\mathbf{P}_s\in\mathbb{Q}_{q}^{N\times N} \forall s$. In Section \ref{sec3_sls_awm}, we present our codebook and show that it outperforms the codebooks in \cite{tsai2018structured,xiao2018enhanced,raviteja2017analog}.
\begin{figure}[t]
\centering
\includegraphics[trim=3.4cm 8.1cm 3.25cm 7.3cm, clip, scale=0.5]{./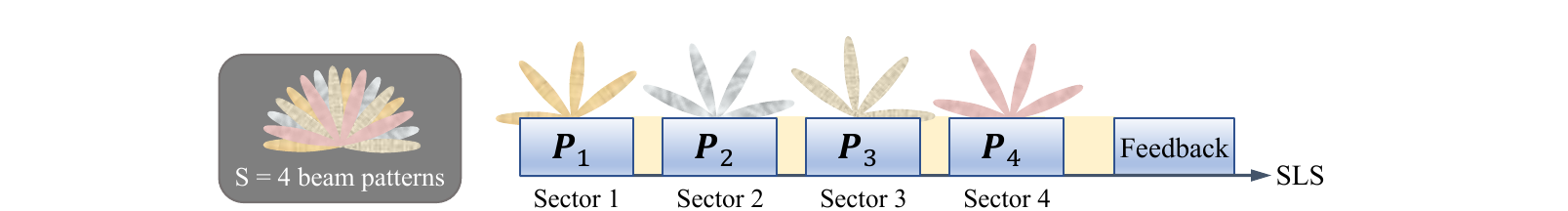}
\caption{\small Example of SLS to determine the sector of interest using $S=4$ disjoint comb-like sectors designed with our approach. The RX finds the sector of interest $\Ao$ as the one that results in the highest received power and feedbacks this information to the TX. The TX can then estimate the channel within the sector using BRP.\normalsize}\label{fig:sls} 
\end{figure}
\subsubsection{Beam refinement protocol}
\par During BRP, the TX applies a collection of AWMs that illuminate directions within the sector of interest. Under the assumption that the $S$ sectors uniformly partition the beamspace, the in-sector channel is an $N^2/S$ dimension vector. A simple approach to estimate this vector is to apply directional beams that exhaustively scan all directions within the sector. Such a scan, however, results in a substantial training overhead, which is in the order of $N^2/S$. To avoid this high training overhead, we develop an in-sector CS-based method that obtains compressive measurements of the channel within the sector of interest. 
\par We use $M$ to denote the number of CS measurements within the sector of interest $\Ao$ to perform in-sector CS. These measurements are obtained using an ensemble of beams $\{\Po[m]\}_{m=0}^{M-1}$ that only illuminate the sector of interest, i.e., the matrices $\{\mathbf{U}^{\ast}_N \Po[m] \mathbf{U}^{\ast}_N\}_{m=0}^{M-1}$ are non-zero only at the indices in $\Ao$. For a measurement noise $v[m]\sim\mathcal{CN}(0,\sigma^2)$, the $m^{\mathrm{th}}$ in-sector measurement $y[m]$ of the channel is
\begin{equation}\label{eqn:rx_incs}
    y[m] = \left\langle \mathbf{H},\Po[m] \right\rangle + v[m].
\end{equation}
Our goal for in-sector CS is to construct $\{\Po[m]\}_{m=0}^{M-1}$ such that: i) $\Po[m]\in\mathbb{Q}_{q}^{N\times N}$ illuminates only the directions within $\Ao$ for each $m$ and ii) the collection results in a low channel estimation error with CS algorithms. In-sector CS is promising over standard CS techniques that employ wide beams, because the SNR of the in-sector measurements is higher than those acquired with a wide beam by about $10\,\mathrm{log}_{10}S$ dB.
\section{Proposed sectors and base AWMs for SLS}\label{sec3_sls_awm}
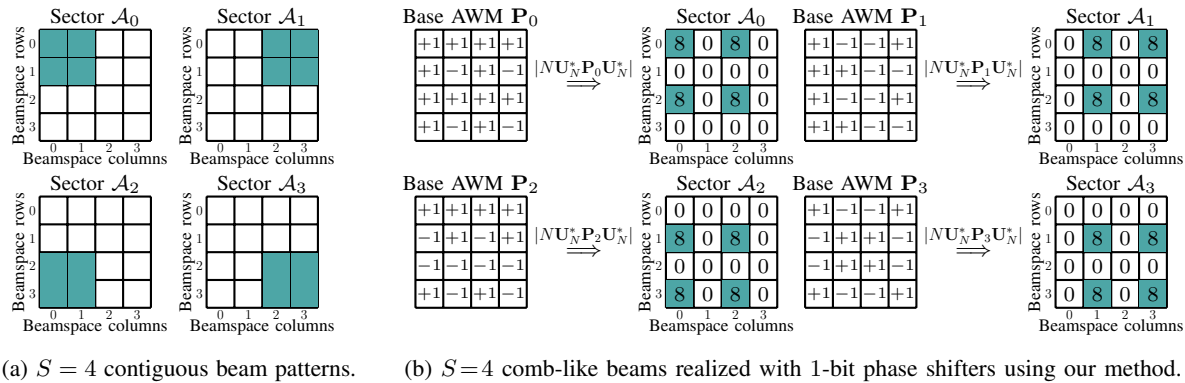
\begin{figure}[t]
\centering
\subfloat[$S=4$ contiguous beam patterns.]{\begin{tikzpicture}[scale=0.37, every node/.style={minimum size=0.37cm-\pgflinewidth, outer sep=0pt}]
\node at (2, 3.45)   (a) {\scriptsize Sector $\mathcal{A}_2$};
\draw[step=1cm,black,thick] (0,-1) grid (4,3);
\node[fill=teal!70] at (0.5,0.5) {};
\node[fill=teal!70] at (1.5,0.5) {};
\node[fill=teal!70] at (0.5,-0.5) {};
\node[fill=teal!70] at (1.5,-0.5) {};
\node at (-0.25,-0.5) {\scalebox{.43}{$3$}};
\node at (-0.25,0.5) {\scalebox{.43}{$2$}};
\node at (-0.25,1.5) {\scalebox{.43}{$1$}};
\node at (-0.25,2.5) {\scalebox{.43}{$0$}};
\node at (0.5,-1.25) {\scalebox{.43}{$0$}};
\node at (1.5,-1.25) {\scalebox{.43}{$1$}};
\node at (2.5,-1.25) {\scalebox{.43}{$2$}};
\node at (3.5,-1.25) {\scalebox{.43}{$3$}};
\node[rotate=90] at (-0.7, 1)   (a) {\scalebox{.55}{Beamspace rows}};
\node at (2, -1.7)   (a) {\scalebox{.55}{Beamspace columns}};
\node at (8, 3.45)   (a) {\scriptsize Sector $\mathcal{A}_3$};
\draw[step=1cm,black,thick] (6,-1) grid (10,3);
\node[fill=teal!70] at (8.5,0.5) {};
\node[fill=teal!70] at (9.5,0.5) {};
\node[fill=teal!70] at (8.5,-0.5) {};
\node[fill=teal!70] at (9.5,-0.5) {};
\node at (5.75,-0.5) {\scalebox{.43}{$3$}};
\node at (5.75,0.5) {\scalebox{.43}{$2$}};
\node at (5.75,1.5) {\scalebox{.43}{$1$}};
\node at (5.75,2.5) {\scalebox{.43}{$0$}};
\node at (6.5,-1.25) {\scalebox{.43}{$0$}};
\node at (7.5,-1.25) {\scalebox{.43}{$1$}};
\node at (8.5,-1.25) {\scalebox{.43}{$2$}};
\node at (9.5,-1.25) {\scalebox{.43}{$3$}};
\node[rotate=90] at (5.3, 1)   (a) {\scalebox{.55}{Beamspace rows}};
\node at (8, -1.7)   (a) {\scalebox{.55}{Beamspace columns}};
\node at (2, 9.45)   (a) {\scriptsize Sector $\mathcal{A}_0$};
\draw[step=1cm,black,thick] (0,5) grid (4,9);
\node[fill=teal!70] at (0.5,8.5) {};
\node[fill=teal!70] at (1.5,8.5) {};
\node[fill=teal!70] at (0.5,7.5) {};
\node[fill=teal!70] at (1.5,7.5) {};
\node at (-0.25,5.5) {\scalebox{.43}{$3$}};
\node at (-0.25,6.5) {\scalebox{.43}{$2$}};
\node at (-0.25,7.5) {\scalebox{.43}{$1$}};
\node at (-0.25,8.5) {\scalebox{.43}{$0$}};
\node at (0.5,4.75) {\scalebox{.43}{$0$}};
\node at (1.5,4.75) {\scalebox{.43}{$1$}};
\node at (2.5,4.75) {\scalebox{.43}{$2$}};
\node at (3.5,4.75) {\scalebox{.43}{$3$}};
\node[rotate=90] at (-0.7, 7)   (a) {\scalebox{.55}{Beamspace rows}};
\node at (2, 4.3)   (a) {\scalebox{.55}{Beamspace columns}};
\node at (8, 9.45)   (a) {\scriptsize Sector $\mathcal{A}_1$};
\draw[step=1cm,black,thick] (6,5) grid (10,9);
\node[fill=teal!70] at (8.5,8.5) {};
\node[fill=teal!70] at (9.5,8.5) {};
\node[fill=teal!70] at (8.5,7.5) {};
\node[fill=teal!70] at (9.5,7.5) {};
\node at (5.75,5.5) {\scalebox{.43}{$3$}};
\node at (5.75,6.5) {\scalebox{.43}{$2$}};
\node at (5.75,7.5) {\scalebox{.43}{$1$}};
\node at (5.75,8.5) {\scalebox{.43}{$0$}};
\node at (6.5,4.75) {\scalebox{.43}{$0$}};
\node at (7.5,4.75) {\scalebox{.43}{$1$}};
\node at (8.5,4.75) {\scalebox{.43}{$2$}};
\node at (9.5,4.75) {\scalebox{.43}{$3$}};
\node[rotate=90] at (5.3, 7)   (a) {\scalebox{.55}{Beamspace rows}};
\node at (8, 4.3)   (a) {\scalebox{.55}{Beamspace columns}};
\end{tikzpicture}
}\label{fig_3a:contiguous}
\hfil
\subfloat[$S\!=\!4$ comb-like beams realized with $1$-bit phase shifters using our method.]{\begin{tikzpicture}[scale=0.37, every node/.style={minimum size=0.37cm-\pgflinewidth, outer sep=0pt}]
\node at (-4, 3.45)   (a) {\scriptsize Sector $\mathcal{A}_2$};
\draw[step=1cm,black,thick] (-6,-1) grid (-2,3);
\node[fill=teal!70] at (-5.5,1.5) {};
\node[fill=teal!70] at (-3.5,1.5) {};
\node[fill=teal!70] at (-5.5,-0.5) {};
\node[fill=teal!70] at (-3.5,-0.5) {};
\node[rotate=90] at (-6.7, 1)   (a) {\scalebox{.55}{Beamspace rows}};
\node at (-4, -1.7)   (a) {\scalebox{.55}{Beamspace columns}};
\node at (-5.5,2.5) {\scriptsize$0$};
\node at (-5.5,1.5) {\scriptsize$8$};
\node at (-5.5,0.5) {\scriptsize$0$};
\node at (-5.5,-0.5) {\scriptsize$8$};
\node at (-4.5,2.5) {\scriptsize$0$};
\node at (-4.5,1.5) {\scriptsize$0$};
\node at (-4.5,0.5) {\scriptsize$0$};
\node at (-4.5,-0.5) {\scriptsize$0$};
\node at (-3.5,2.5) {\scriptsize$0$};
\node at (-3.5,1.5) {\scriptsize$8$};
\node at (-3.5,0.5) {\scriptsize$0$};
\node at (-3.5,-0.5) {\scriptsize$8$};
\node at (-2.5,2.5) {\scriptsize$0$};
\node at (-2.5,1.5) {\scriptsize$0$};
\node at (-2.5,0.5) {\scriptsize$0$};
\node at (-2.5,-0.5) {\scriptsize$0$};
\node at (-6.25,-0.5) {\scalebox{.43}{$3$}};
\node at (-6.25,0.5) {\scalebox{.43}{$2$}};
\node at (-6.25,1.5) {\scalebox{.43}{$1$}};
\node at (-6.25,2.5) {\scalebox{.43}{$0$}};
\node at (-5.5,-1.25) {\scalebox{.43}{$0$}};
\node at (-4.5,-1.25) {\scalebox{.43}{$1$}};
\node at (-3.5,-1.25) {\scalebox{.43}{$2$}};
\node at (-2.5,-1.25) {\scalebox{.43}{$3$}};
\node at (-13, 3.45)   (a) {\scriptsize Base AWM $\mathbf{P}_2$};
\draw[step=1cm,black,thick] (-15,-1) grid (-11,3);
\node at (-14.5,2.5) {\tiny$+1$};
\node at (-14.5,1.5) {\tiny$-1$};
\node at (-14.5,0.5) {\tiny$-1$};
\node at (-14.5,-0.5) {\tiny$+1$};
\node at (-13.5,2.5) {\tiny$+1$};
\node at (-13.5,1.5) {\tiny$+1$};
\node at (-13.5,0.5) {\tiny$-1$};
\node at (-13.5,-0.5) {\tiny$-1$};
\node at (-12.5,2.5) {\tiny$+1$};
\node at (-12.5,1.5) {\tiny$-1$};
\node at (-12.5,0.5) {\tiny$-1$};
\node at (-12.5,-0.5) {\tiny$+1$};
\node at (-11.5,2.5) {\tiny$+1$};
\node at (-11.5,1.5) {\tiny$+1$};
\node at (-11.5,0.5) {\tiny$-1$};
\node at (-11.5,-0.5) {\tiny$-1$};
\node at (-8.95,1) {\scriptsize{$\Longrightarrow$}};
\node at (-8.95,1.6) {\scalebox{0.55}{$|N\mathbf{U}_{N}^{*}\mathbf{P}_2\mathbf{U}_{N}^{*}|$}};
\node at (10, 3.45)   (a) {\scriptsize Sector $\mathcal{A}_3$};
\draw[step=1cm,black,thick] (8,-1) grid (12,3);
\node[fill=teal!70] at (9.5,1.5) {};
\node[fill=teal!70] at (11.5,1.5) {};
\node[fill=teal!70] at (9.5,-0.5) {};
\node[fill=teal!70] at (11.5,-0.5) {};
\node[rotate=90] at (7.3, 1)   (a) {\scalebox{.55}{Beamspace rows}};
\node at (10, -1.7)   (a) {\scalebox{.55}{Beamspace columns}};
\node at (8.5,2.5) {\scriptsize$0$};
\node at (8.5,1.5) {\scriptsize$0$};
\node at (8.5,0.5) {\scriptsize$0$};
\node at (8.5,-0.5) {\scriptsize$0$};
\node at (9.5,2.5) {\scriptsize$0$};
\node at (9.5,1.5) {\scriptsize$8$};
\node at (9.5,0.5) {\scriptsize$0$};
\node at (9.5,-0.5) {\scriptsize$8$};
\node at (10.5,2.5) {\scriptsize$0$};
\node at (10.5,1.5) {\scriptsize$0$};
\node at (10.5,0.5) {\scriptsize$0$};
\node at (10.5,-0.5) {\scriptsize$0$};
\node at (11.5,2.5) {\scriptsize$0$};
\node at (11.5,1.5) {\scriptsize$8$};
\node at (11.5,0.5) {\scriptsize$0$};
\node at (11.5,-0.5) {\scriptsize$8$};
\node at (7.75,-0.5) {\scalebox{.43}{$3$}};
\node at (7.75,0.5) {\scalebox{.43}{$2$}};
\node at (7.75,1.5) {\scalebox{.43}{$1$}};
\node at (7.75,2.5) {\scalebox{.43}{$0$}};
\node at (8.5,-1.25) {\scalebox{.43}{$0$}};
\node at (9.5,-1.25) {\scalebox{.43}{$1$}};
\node at (10.5,-1.25) {\scalebox{.43}{$2$}};
\node at (11.5,-1.25) {\scalebox{.43}{$3$}};
\node at (1, 3.45)   (a) {\scriptsize Base AWM $\mathbf{P}_3$};
\draw[step=1cm,black,thick] (-1,-1) grid (3,3);
\node at (-.5,2.5) {\tiny$+1$};
\node at (-.5,1.5) {\tiny$-1$};
\node at (-.5,0.5) {\tiny$-1$};
\node at (-.5,-0.5) {\tiny$+1$};
\node at (0.5,2.5) {\tiny$-1$};
\node at (0.5,1.5) {\tiny$+1$};
\node at (0.5,0.5) {\tiny$+1$};
\node at (0.5,-0.5) {\tiny$-1$};
\node at (1.5,2.5) {\tiny$-1$};
\node at (1.5,1.5) {\tiny$+1$};
\node at (1.5,0.5) {\tiny$+1$};
\node at (1.5,-0.5) {\tiny$-1$};
\node at (2.5,2.5) {\tiny$+1$};
\node at (2.5,1.5) {\tiny$-1$};
\node at (2.5,0.5) {\tiny$-1$};
\node at (2.5,-0.5) {\tiny$+1$};
\node at (5.05,1) {\scriptsize{$\Longrightarrow$}};
\node at (5.05,1.6) {\scalebox{0.55}{$|N\mathbf{U}_{N}^{*}\mathbf{P}_3\mathbf{U}_{N}^{*}|$}};
\node at (-4, 9.45)   (a) {\scriptsize Sector $\mathcal{A}_0$};
\draw[step=1cm,black,thick] (-6,5) grid (-2,9);
\node[fill=teal!70] at (-5.5,8.5) {};
\node[fill=teal!70] at (-3.5,8.5) {};
\node[fill=teal!70] at (-5.5,6.5) {};
\node[fill=teal!70] at (-3.5,6.5) {};
\node[rotate=90] at (-6.7, 7)   (a) {\scalebox{.55}{Beamspace rows}};
\node at (-4, 4.3)   (a) {\scalebox{.55}{Beamspace columns}};
\node at (-5.5,8.5) {\scriptsize$8$};
\node at (-5.5,7.5) {\scriptsize$0$};
\node at (-5.5,6.5) {\scriptsize$8$};
\node at (-5.5,5.5) {\scriptsize$0$};
\node at (-4.5,8.5) {\scriptsize$0$};
\node at (-4.5,7.5) {\scriptsize$0$};
\node at (-4.5,6.5) {\scriptsize$0$};
\node at (-4.5,5.5) {\scriptsize$0$};
\node at (-3.5,8.5) {\scriptsize$8$};
\node at (-3.5,7.5) {\scriptsize$0$};
\node at (-3.5,6.5) {\scriptsize$8$};
\node at (-3.5,5.5) {\scriptsize$0$};
\node at (-2.5,8.5) {\scriptsize$0$};
\node at (-2.5,7.5) {\scriptsize$0$};
\node at (-2.5,6.5) {\scriptsize$0$};
\node at (-2.5,5.5) {\scriptsize$0$};
\node at (-6.25,5.5) {\scalebox{.43}{$3$}};
\node at (-6.25,6.5) {\scalebox{.43}{$2$}};
\node at (-6.25,7.5) {\scalebox{.43}{$1$}};
\node at (-6.25,8.5) {\scalebox{.43}{$0$}};
\node at (-5.5,4.75) {\scalebox{.43}{$0$}};
\node at (-4.5,4.75) {\scalebox{.43}{$1$}};
\node at (-3.5,4.75) {\scalebox{.43}{$2$}};
\node at (-2.5,4.75) {\scalebox{.43}{$3$}};
\node at (-13, 9.45)   (a) {\scriptsize Base AWM $\mathbf{P}_0$};
\draw[step=1cm,black,thick] (-15,5) grid (-11,9);
\node at (-14.5,8.5) {\tiny$+1$};
\node at (-14.5,7.5) {\tiny$+1$};
\node at (-14.5,6.5) {\tiny$+1$};
\node at (-14.5,5.5) {\tiny$+1$};
\node at (-13.5,8.5) {\tiny$+1$};
\node at (-13.5,7.5) {\tiny$-1$};
\node at (-13.5,6.5) {\tiny$+1$};
\node at (-13.5,5.5) {\tiny$-1$};
\node at (-12.5,8.5) {\tiny$+1$};
\node at (-12.5,7.5) {\tiny$+1$};
\node at (-12.5,6.5) {\tiny$+1$};
\node at (-12.5,5.5) {\tiny$+1$};
\node at (-11.5,8.5) {\tiny$+1$};
\node at (-11.5,7.5) {\tiny$-1$};
\node at (-11.5,6.5) {\tiny$+1$};
\node at (-11.5,5.5) {\tiny$-1$};
\node at (-8.95,7) {\scriptsize{$\Longrightarrow$}};
\node at (-8.95,7.6) {\scalebox{0.55}{$|N\mathbf{U}_{N}^{*}\mathbf{P}_0\mathbf{U}_{N}^{*}|$}};
\node at (10, 9.45)   (a) {\scriptsize Sector $\mathcal{A}_1$};
\draw[step=1cm,black,thick] (8,5) grid (12,9);
\node[fill=teal!70] at (9.5,8.5) {};
\node[fill=teal!70] at (11.5,8.5) {};
\node[fill=teal!70] at (9.5,6.5) {};
\node[fill=teal!70] at (11.5,6.5) {};
\node[rotate=90] at (7.3, 7)   (a) {\scalebox{.55}{Beamspace rows}};
\node at (10, 4.3)   (a) {\scalebox{.55}{Beamspace columns}};
\node at (8.5,8.5) {\scriptsize$0$};
\node at (8.5,7.5) {\scriptsize$0$};
\node at (8.5,6.5) {\scriptsize$0$};
\node at (8.5,5.5) {\scriptsize$0$};
\node at (9.5,8.5) {\scriptsize$8$};
\node at (9.5,7.5) {\scriptsize$0$};
\node at (9.5,6.5) {\scriptsize$8$};
\node at (9.5,5.5) {\scriptsize$0$};
\node at (10.5,8.5) {\scriptsize$0$};
\node at (10.5,7.5) {\scriptsize$0$};
\node at (10.5,6.5) {\scriptsize$0$};
\node at (10.5,5.5) {\scriptsize$0$};
\node at (11.5,8.5) {\scriptsize$8$};
\node at (11.5,7.5) {\scriptsize$0$};
\node at (11.5,6.5) {\scriptsize$8$};
\node at (11.5,5.5) {\scriptsize$0$};
\node at (7.75,5.5) {\scalebox{.43}{$3$}};
\node at (7.75,6.5) {\scalebox{.43}{$2$}};
\node at (7.75,7.5) {\scalebox{.43}{$1$}};
\node at (7.75,8.5) {\scalebox{.43}{$0$}};
\node at (8.5,4.75) {\scalebox{.43}{$0$}};
\node at (9.5,4.75) {\scalebox{.43}{$1$}};
\node at (10.5,4.75) {\scalebox{.43}{$2$}};
\node at (11.5,4.75) {\scalebox{.43}{$3$}};
\node at (1, 9.45)   (a) {\scriptsize Base AWM $\mathbf{P}_1$};
\draw[step=1cm,black,thick] (-1,5) grid (3,9);
\node at (-.5,8.5) {\tiny$+1$};
\node at (-.5,7.5) {\tiny$+1$};
\node at (-.5,6.5) {\tiny$+1$};
\node at (-.5,5.5) {\tiny$+1$};
\node at (0.5,8.5) {\tiny$-1$};
\node at (0.5,7.5) {\tiny$+1$};
\node at (0.5,6.5) {\tiny$-1$};
\node at (0.5,5.5) {\tiny$+1$};
\node at (1.5,8.5) {\tiny$-1$};
\node at (1.5,7.5) {\tiny$-1$};
\node at (1.5,6.5) {\tiny$-1$};
\node at (1.5,5.5) {\tiny$-1$};
\node at (2.5,8.5) {\tiny$+1$};
\node at (2.5,7.5) {\tiny$-1$};
\node at (2.5,6.5) {\tiny$+1$};
\node at (2.5,5.5) {\tiny$-1$};
\node at (5.05,7) {\scriptsize{$\Longrightarrow$}};
\node at (5.05,7.6) {\scalebox{0.55}{$|N\mathbf{U}_{N}^{*}\mathbf{P}_1\mathbf{U}_{N}^{*}|$}};
\end{tikzpicture}}
\caption{\small Contiguous and comb-like beam patterns to illuminate $S=4$ disjoint sections of the 2D beamspace with a $4\times 4$ array at the TX. With $1$-bit phase shifters, the  AWMs corresponding to the beam patterns in (a) cannot be realized, while those corresponding to the patterns in (b) can be achieved using our construction in Sec. \ref{subsec_base_awm}.\normalsize}\label{fig_2:sectorTypes} 
\end{figure}
\par SLS is usually performed using beam patterns that illuminate disjoint sectors within the beamspace. AWMs that achieve contiguous illumination patterns, such as the ones shown in Fig.~\ref{fig_2:sectorTypes}\textcolor{red}{(a)}, are common in the SLS literature \cite{giordani2016initial, xiao2018enhanced,raviteja2017analog}. Such contiguous patterns, however, cannot be realized when the TX has low-resolution phase shifters. To see this, consider the problem of constructing the illumination patterns in Fig.~\ref{fig_2:sectorTypes}\textcolor{red}{(a)} using a $4\times 4$ antenna array with one-bit phase shifters. Here, the AWMs need to be chosen from $\{\pm 1\}^{4 \times 4}$ due to the one-bit constraint. As the matrices in $\{\pm 1\}^{4 \times 4}$ are purely real, their inverse 2D-DFTs are mirror symmetric \cite{jain1989fundamentals}, i.e., a pattern that illuminates $(k,\ell)$ beamspace index must also illuminates the $(\langle -k \rangle_N, \langle -\ell \rangle_N)$. Due to the mirror symmetry constraint in the illuminations associated with a one-bit phased array, the patterns in Fig.~\ref{fig_2:sectorTypes}\textcolor{red}{(a)} cannot be realized in one-bit phased arrays. In this paper, we design comb-like illumination patterns such as the ones shown in Fig.~\ref{fig_2:sectorTypes}\textcolor{red}{(b)}. Such comb-like patterns exhibit mirror symmetry and the corresponding AWMs can be realized even with low-resolution phased arrays.
\subsection{Proposed construction for AWMs in SLS}\label{subsec_base_awm}
\par Now, we explain the structure of the proposed comb-like patterns and discuss our method to design the base AWMs $\{\mathbf{P}_s\}_{s=0}^{S-1}$ with entries in $\mathbb{Q}^{N \times N}_q$ to achieve such patterns. 
\par We drop the sector index $s$ to explain the first step in our technique to construct AWMs for SLS. To design comb-like structures in the beamspace, our method constructs an antenna domain building block $\mathbf{C}$ with $N_{\mathrm{e}}$ rows and $N_{\mathrm{a}}$ columns. We assume that both $N_{\mathrm{e}}$ and $N_{\mathrm{a}}$ divide $N$, and the product $N_{\mathrm{e}}N_{\mathrm{a}}$ is a power of $2$. The building block $\mathbf{C}$ is chosen from the 2D-DFT codebook such that it illuminates just one direction in the $N_{\mathrm{e}}\times N_{\mathrm{a}}$ beamspace. An example for $\mathbf{C}$ is shown in Fig. \ref{fig_5:sec_design_3steps}\textcolor{red}{(a)}. With our design, the inverse 2D-DFT of $\mathbf{C}$, i.e., $\mathbf{U}_{N_{\mathrm{e}}}^{\ast} \mathbf{C} \mathbf{U}_{N_{\mathrm{a}}}^{\ast}$, has just one non-zero entry and $N_{\mathrm{e}}N_{\mathrm{a}}-1$ zeros. Next, the matrix $\mathbf{C}$ is upsampled by a factor of $N/N_{\mathrm{e}}$ along the row dimension, and by a factor of $N/N_{\mathrm{a}}$ along the column dimension. We define these upsampling factors as 
\begin{align}\label{eqn:rho_e}
\rho_{\mathrm{e}} &= {N}/{N_{\mathrm{e}}} \; \mathrm{and}\\
\label{eqn:rho_a}
\rho_{\mathrm{a}} &= {N}/{N_{\mathrm{a}}},
\end{align}
and the upsampled version of $\mathbf{C}$ is defined as $\tilde{\mathbf{C}} \in \mathbb{C}^{N \times N}$. As upsampling a signal results in replication and scaling in the Fourier domain, it can be shown that the beamspace representation of $\tilde{\mathbf{C}}$ is a scaled repetition of the inverse Fourier transform of the building block \cite{manolakis2011applied}. This repetitive pattern results in a comb-like structure in the beamspace as shown in Fig. \ref{fig_5:sec_design_3steps}. 
\par Now, we summarize the final step in base AWM design to construct a phased array compatible matrix which achieves a comb-like beam pattern. Although the upsampled building block $\tilde{\mathbf{C}}$ results in a comb-like pattern, it cannot be realized with a phased array. This is because  $\tilde{\mathbf{C}} \notin \mathbb{Q}^{N \times N}_q$ as it has several zeros. To construct a matrix in $\mathbb{Q}^{N \times N}_q$ that achieves a comb-like pattern, we first make use of the property that circularly shifting a matrix does not change the magnitude of its 2D-DFT. As a result, any 2D-circular shift of $\tilde{\mathbf{C}}$ results in the same illumination pattern as $\tilde{\mathbf{C}}$. We also observe that the locations of the zeros are complementary across all the 2D-circular shifts of $\tilde{\mathbf{C}}$. Our method takes a weighted combination of the $\rho_{\mathrm{e}}\rho_{\mathrm{a}}$ 2D-circular shifts of $\tilde{\mathbf{C}}$ to construct a base AWM in $\mathbb{Q}^{N \times N}_q$. The weights are optimized such that the illumination pattern associated with this weighted combination is almost flat within the designed sector. We will show in Sec. \ref{sec6_mse} that such flat profiles achieve a low reconstruction error in the sparse estimate. Our design procedure is illustrated in Fig. \ref{fig_5:sec_design_3steps} for a particular sector.       
\par We now explain the mathematical details of our procedure to construct $S = N_\mathrm{a}N_\mathrm{e}$ sectors. As the entries of the building block are drawn from an $N_\mathrm{e} \times N_\mathrm{a}$ 2D-DFT codebook, realizing the designed AWMs in a phased array requires $q=\mathrm{log}_2(\max(\{N_\mathrm{a}, N_\mathrm{e}\}))$-bit phase shifters. We index the $N_\mathrm{a}N_\mathrm{e}$ sectors in our design using the 2D-index pair $(k_\mathrm{e},k_\mathrm{a})$, where $k_\mathrm{e}\in[N_\mathrm{e}]$ and $k_\mathrm{a}\in[N_\mathrm{a}]$. The sector index $s$ is expressed as $s=N_\mathrm{a}k_\mathrm{e} + k_\mathrm{a}$. The set of beamspace indices associated with this sector is
\begin{equation}\label{eqn:comb_sec}
    \mathcal{A}_s = \{(p,q): p = nN_{\mathrm{e}}+k_{\mathrm{e}}, q = mN_{\mathrm{a}}+k_{\mathrm{a}}, n = [\rho_{\mathrm{e}}], m = [\rho_{\mathrm{a}}]\},
\end{equation}
which can be interpreted as a comb-like pattern anchored at $(k_\mathrm{e},k_\mathrm{a})$. A detailed description of our method to construct a base AWM $\mathbf{P}_s$ that illuminates $\mathcal{A}_s$ is given below.
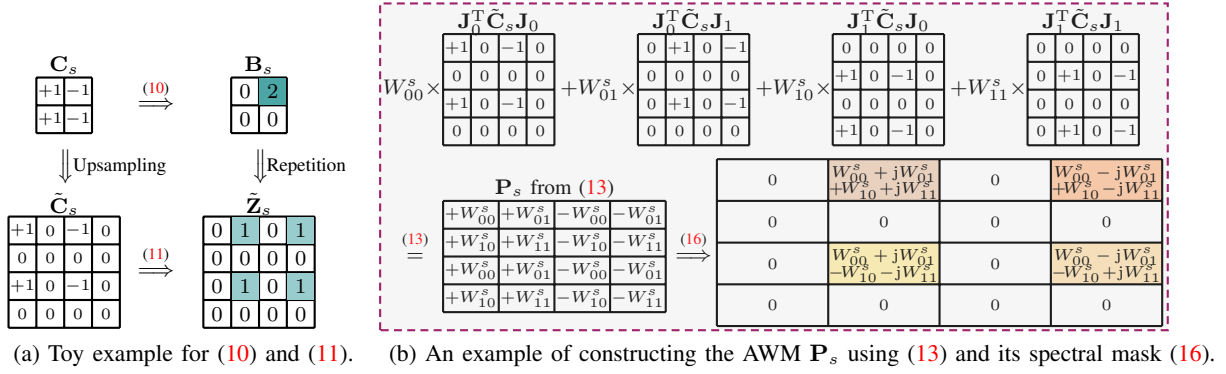
\begin{figure}[t]
\centering
\subfloat[Toy example for \eqref{eqn:block_B1} and \eqref{eqn:z_hat}.]{\begin{tikzpicture}[scale=0.37, every node/.style={minimum size=0.37cm-\pgflinewidth, outer sep=0pt}]
\node at (2, 3.45)   (a) {\scriptsize $\tilde{\mathbf{C}}_s$};
\draw[step=1cm,black,thick] (0,-1) grid (4,3);
\node at (0.5,2.5) {\tiny$+1$};
\node at (1.5,2.5) {\tiny$0$};
\node at (2.5,2.5) {\tiny$-1$};
\node at (3.5,2.5) {\tiny$0$};
\node at (0.5,1.5) {\tiny$0$};
\node at (1.5,1.5) {\tiny$0$};
\node at (2.5,1.5) {\tiny$0$};
\node at (3.5,1.5) {\tiny$0$};
\node at (0.5,0.5) {\tiny$+1$};
\node at (1.5,0.5) {\tiny$0$};
\node at (2.5,0.5) {\tiny$-1$};
\node at (3.5,0.5) {\tiny$0$};
\node at (0.5,-0.5) {\tiny$0$};
\node at (1.5,-0.5) {\tiny$0$};
\node at (2.5,-0.5) {\tiny$0$};
\node at (3.5,-0.5) {\tiny$0$};
\node[rotate=270] at (9,4.8) {\scriptsize{$\Longrightarrow$}};
\node at (10.7,4.8) {\scalebox{0.6}{Repetition}};
\node at (9, 3.45)   (a) {\scriptsize $\tilde{\mathbf{Z}}_s$};
\draw[step=1cm,black,thick] (7,-1) grid (11,3);
\node[fill=teal!40] at (10.5,0.5) {};
\node[fill=teal!40] at (8.5,0.5) {};
\node[fill=teal!40] at (10.5,2.5) {};
\node[fill=teal!40] at (8.5,2.5) {};
\node at (10.5,-0.5) {\scriptsize$0$};
\node at (10.5,0.5) {\scriptsize$1$};
\node at (10.5,1.5) {\scriptsize$0$};
\node at (10.5,2.5) {\scriptsize$1$};
\node at (9.5,-0.5) {\scriptsize$0$};
\node at (9.5,0.5) {\scriptsize$0$};
\node at (9.5,1.5) {\scriptsize$0$};
\node at (9.5,2.5) {\scriptsize$0$};
\node at (8.5,-0.5) {\scriptsize$0$};
\node at (8.5,0.5) {\scriptsize$1$};
\node at (8.5,1.5) {\scriptsize$0$};
\node at (8.5,2.5) {\scriptsize$1$};
\node at (7.5,-0.5) {\scriptsize$0$};
\node at (7.5,0.5) {\scriptsize$0$};
\node at (7.5,1.5) {\scriptsize$0$};
\node at (7.5,2.5) {\scriptsize$0$};
\node at (5.3,1) {\scriptsize{$\Longrightarrow$}};
\node at (5.3,1.6) {\tiny{\eqref{eqn:z_hat}}};
\node[rotate=270] at (2,4.8) {\scriptsize{$\Longrightarrow$}};
\node at (4,4.8) {\scalebox{0.6}{Upsampling}};
\node at (2, 8.45)   (a) {\scriptsize $\mathbf{C}_s$};
\draw[step=1cm,black,thick] (1,6) grid (3,8);
\node at (1.5,7.5) {\tiny$+1$};
\node at (2.5,7.5) {\tiny$-1$};
\node at (1.5,6.5) {\tiny$+1$};
\node at (2.5,6.5) {\tiny$-1$};
\node at (9, 8.45)   (a) {\scriptsize $\mathbf{B}_s$};
\draw[step=1cm,black,thick] (8,6) grid (10,8);
\node[fill=teal!70] at (9.5,7.5) {};
\node at (9.5,7.5) {\scriptsize$2$};
\node at (8.5,7.5) {\scriptsize$0$};
\node at (9.5,6.5) {\scriptsize$0$};
\node at (8.5,6.5) {\scriptsize$0$};
\node at (5.3,7) {\scriptsize{$\Longrightarrow$}};
\node at (5.3,7.6) {\tiny{\eqref{eqn:block_B1}}};
\end{tikzpicture}
}
\hfil
\subfloat[An example of constructing the AWM $\mathbf{P}_s$ using \eqref{eqn:AWM_from_upsampled} and its spectral mask \eqref{eqn:mask_from_circ}.]{\begin{tikzpicture}[scale=0.37, every node/.style={minimum size=0.37cm-\pgflinewidth, outer sep=0pt}]
\node at (2, 9.45)   (a) {\scriptsize $\mathbf{J}_{0}^{\mathrm{T}}\tilde{\mathbf{C}}_s\mathbf{J}_{0}$};
\draw[step=1cm,black,thick] (0,5) grid (4,9);
\node at (-1.1,7) {\scriptsize $W^s_{00}\times$};
\node at (0.5,8.5) {\tiny$+1$};
\node at (0.5,7.5) {\tiny$0$};
\node at (0.5,6.5) {\tiny$+1$};
\node at (0.5,5.5) {\tiny$0$};
\node at (1.5,8.5) {\tiny$0$};
\node at (1.5,7.5) {\tiny$0$};
\node at (1.5,6.5) {\tiny$0$};
\node at (1.5,5.5) {\tiny$0$};
\node at (2.5,8.5) {\tiny$-1$};
\node at (2.5,7.5) {\tiny$0$};
\node at (2.5,6.5) {\tiny$-1$};
\node at (2.5,5.5) {\tiny$0$};
\node at (3.5,8.5) {\tiny$0$};
\node at (3.5,7.5) {\tiny$0$};
\node at (3.5,6.5) {\tiny$0$};
\node at (3.5,5.5) {\tiny$0$};
\node at (9, 9.45)   (a) {\scriptsize $\mathbf{J}_{0}^{\mathrm{T}}\tilde{\mathbf{C}}_s\mathbf{J}_{1}$};
\draw[step=1cm,black,thick] (7,5) grid (11,9);
\node at (5.6,7) {\scriptsize $+W^s_{01}\times$};
\node at (7.5,8.5) {\tiny$0$};
\node at (7.5,7.5) {\tiny$0$};
\node at (7.5,6.5) {\tiny$0$};
\node at (7.5,5.5) {\tiny$0$};
\node at (8.5,8.5) {\tiny$+1$};
\node at (8.5,7.5) {\tiny$0$};
\node at (8.5,6.5) {\tiny$+1$};
\node at (8.5,5.5) {\tiny$0$};
\node at (9.5,8.5) {\tiny$0$};
\node at (9.5,7.5) {\tiny$0$};
\node at (9.5,6.5) {\tiny$0$};
\node at (9.5,5.5) {\tiny$0$};
\node at (10.5,8.5) {\tiny$-1$};
\node at (10.5,7.5) {\tiny$0$};
\node at (10.5,6.5) {\tiny$-1$};
\node at (10.5,5.5) {\tiny$0$};
\node at (16, 9.45)   (a) {\scriptsize $\mathbf{J}_{1}^{\mathrm{T}}\tilde{\mathbf{C}}_s\mathbf{J}_{0}$};
\draw[step=1cm,black,thick] (14,5) grid (18,9);
\node at (12.6,7) {\scriptsize $+W^s_{10}\times$};
\node at (14.5,8.5) {\tiny$0$};
\node at (14.5,7.5) {\tiny$+1$};
\node at (14.5,6.5) {\tiny$0$};
\node at (14.5,5.5) {\tiny$+1$};
\node at (15.5,8.5) {\tiny$0$};
\node at (15.5,7.5) {\tiny$0$};
\node at (15.5,6.5) {\tiny$0$};
\node at (15.5,5.5) {\tiny$0$};
\node at (16.5,8.5) {\tiny$0$};
\node at (16.5,7.5) {\tiny$-1$};
\node at (16.5,6.5) {\tiny$0$};
\node at (16.5,5.5) {\tiny$-1$};
\node at (17.5,8.5) {\tiny$0$};
\node at (17.5,7.5) {\tiny$0$};
\node at (17.5,6.5) {\tiny$0$};
\node at (17.5,5.5) {\tiny$0$};
\node at (23, 9.45)   (a) {\scriptsize $\mathbf{J}_{1}^{\mathrm{T}}\tilde{\mathbf{C}}_s\mathbf{J}_{1}$};
\draw[step=1cm,black,thick] (21,5) grid (25,9);
\node at (19.6,7) {\scriptsize $+W^s_{11}\times$};
\node at (21.5,8.5) {\tiny$0$};
\node at (21.5,7.5) {\tiny$0$};
\node at (21.5,6.5) {\tiny$0$};
\node at (21.5,5.5) {\tiny$0$};
\node at (22.5,8.5) {\tiny$0$};
\node at (22.5,7.5) {\tiny$+1$};
\node at (22.5,6.5) {\tiny$0$};
\node at (22.5,5.5) {\tiny$+1$};
\node at (23.5,8.5) {\tiny$0$};
\node at (23.5,7.5) {\tiny$0$};
\node at (23.5,6.5) {\tiny$0$};
\node at (23.5,5.5) {\tiny$0$};
\node at (24.5,8.5) {\tiny$0$};
\node at (24.5,7.5) {\tiny$-1$};
\node at (24.5,6.5) {\tiny$0$};
\node at (24.5,5.5) {\tiny$-1$};
\node at (4, 3.45)   (a) {\scriptsize $\mathbf{P}_s$ from \eqref{eqn:AWM_from_upsampled}};
\draw[xstep=2cm,ystep=1,black,thick] (0,-1) grid (8,3);
\node at (-1,1) {\scriptsize{$=$}};
\node at (-1,1.6) {\tiny{\eqref{eqn:AWM_from_upsampled}}};
\node at (1,2.5) {\tiny$+W^s_{00}$};
\node at (1,1.5) {\tiny$+W^s_{10}$};
\node at (1,0.5) {\tiny$+W^s_{00}$};
\node at (1,-0.5) {\tiny$+W^s_{10}$};
\node at (3,2.5) {\tiny$+W^s_{01}$};
\node at (3,1.5) {\tiny$+W^s_{11}$};
\node at (3,0.5) {\tiny$+W^s_{01}$};
\node at (3,-0.5) {\tiny$+W^s_{11}$};
\node at (5,2.5) {\tiny$-W^s_{00}$};
\node at (5,1.5) {\tiny$-W^s_{10}$};
\node at (5,0.5) {\tiny$-W^s_{00}$};
\node at (5,-0.5) {\tiny$-W^s_{10}$};
\node at (7,2.5) {\tiny$-W^s_{01}$};
\node at (7,1.5) {\tiny$-W^s_{11}$};
\node at (7,0.5) {\tiny$-W^s_{01}$};
\node at (7,-0.5) {\tiny$-W^s_{11}$};
\node at (9,1) {\scriptsize{$\Longrightarrow$}};
\node at (9,1.6) {\tiny{\eqref{eqn:mask_from_circ}}};
\hspace{-0.8cm}\draw [fill=Tan!40,draw=none] (16,3) rectangle (20,4.5);
\draw [fill=Goldenrod!45,draw=none] (16,0) rectangle (20,1.5);
\draw [fill=Orange!40,draw=none] (24,3) rectangle (28,4.5);
\draw [fill=Dandelion!35,draw=none] (24,0) rectangle (28,1.5);
\draw[xstep=4cm,ystep=1.5,black,thick] (12,-1.5) grid (28,4.5);
\node at (14,-0.75) {\tiny$0$};
\node at (14,0.75) {\tiny$0$};
\node at (14,2.25) {\tiny$0$};
\node at (14,3.75) {\tiny$0$};
\node at (18,4) {\tiny$W^s_{00}+\mathrm{j}W^s_{01}$};
\node at (18,3.4) {\tiny$+\!W^s_{10}\!+\!\mathrm{j}W^s_{11}$};
\node at (18,-0.75) {\tiny$0$};
\node at (18,1) {\tiny$W^s_{00}+\mathrm{j}W^s_{01}$};
\node at (18,0.4) {\tiny$-\!W^s_{10}\!-\!\mathrm{j}W^s_{11}$};
\node at (18,2.25) {\tiny$0$};
\node at (22,-0.75) {\tiny$0$};
\node at (22,0.75) {\tiny$0$};
\node at (22,2.25) {\tiny$0$};
\node at (22,3.75) {\tiny$0$};
\node at (26,-0.75) {\tiny$0$};
\node at (26,4) {\tiny$W^s_{00}-\mathrm{j}W^s_{01}$};
\node at (26,3.4) {\tiny$+\!W^s_{10}\!-\!\mathrm{j}W^s_{11}$};
\node at (26,2.25) {\tiny$0$};
\node at (26,1) {\tiny$W^s_{00}-\mathrm{j}W^s_{01}$};
\node at (26,0.4) {\tiny$-\!W^s_{10}\!+\!\mathrm{j}W^s_{11}$};
\draw[RedViolet,densely dashed,fill=Gray!50,fill opacity=0.15,thick] (-0.1, -1.65) rectangle (28.2, 10.1) {};
\end{tikzpicture}}
\caption{\small An example of our proposed framework to construct AWMs that focus energy within sectors with comb-like patterns. Here, $N=4$, the number of sectors is $S=4$, $N_{\mathrm{e}}=2$,  and $N_{\mathrm{a}}=2$. Therefore, $\rho_\mathrm{e}=2$ and $\rho_\mathrm{a}=2$.\normalsize}\label{fig_5:sec_design_3steps} 
\end{figure}
\begin{itemize}
     \item Construct the building block $\mathbf{C}_s\in\mathbb{Q}_q^{N_{\mathrm{e}}\times N_{\mathrm{a}}}$ as an outer product of the $k_{\mathrm{e}}^{\text{th}}$ column of $\mathbf{U}_{N_\mathrm{e}}$ and the $k_{\mathrm{a}}^{\text{th}}$ column of $\mathbf{U}_{N_\mathrm{a}}$, i.e., 
     \begin{equation}
         \mathbf{C}_s = \mathbf{U}_{N_\mathrm{e}}(:,k_\mathrm{e})\mathbf{U}_{N_\mathrm{a}}^{T}(:,k_\mathrm{a}).
     \end{equation}
     Similar to \cite{myers2019falp}, we define $\mathbf{C}_{s,\mathrm{FC}}$ as the flipped and conjugated version of $\mathbf{C}_s$. The angle domain matrix associated with the building block is then
    \begin{equation}\label{eqn:block_B1}
     \mathbf{B}_s = \mathbf{U}_{N_{\mathrm{e}}}^{*}\mathbf{C}_{s,\mathrm{FC}}\mathbf{U}_{N_{\mathrm{a}}}^{*}
    \end{equation}
     It can be shown that $B_s(k_{\mathrm{e}},k_{\mathrm{a}}) = 1$ and $B_s(i,j) = 0~ \forall (i,j)\neq (k_{\mathrm{e}},k_{\mathrm{a}})$.
 \item Upsample $\mathbf{C}_s$ by a factor of $\rho_\mathrm{e}$ along the columns and $\rho_\mathrm{a}$ along the rows. The upsampled matrix is obtained by first inserting $\rho_\mathrm{e}-1$ zero-valued row vectors between successive rows of $\mathbf{C}_s$, and then inserting $\rho_\mathrm{a}-1$ zero-valued column vectors between successive columns of the resultant matrix. The upsampled result is an $N\times N$ matrix $\tilde{\mathbf{C}}_s$, whose beamspace representation is defined as 
 \begin{align}
 \tilde{\mathbf{Z}}_s &= \mathbf{U}_{N}^{*}\tilde{\mathbf{C}}_{s,\mathrm{FC}}\mathbf{U}_{N}^{*}\label{eqn:z_hat}\\[0.8em]
      &= \frac{1}{\sqrt{\rho_{\mathrm{e}}\rho_{\mathrm{a}}}}\vphantom{
    \begin{matrix}
 \overbrace{}^{\mbox{$\rho_{\mathrm{a}}~\text{times}$}}
    \end{matrix}}%
\begin{bmatrix}
\coolover{\rho_{\mathrm{a}}~\text{times}}{\mathbf{B}_s & \cdots & \mathbf{B}_s}\\
    \vdots & \ddots & \vdots\\
    \mathbf{B}_s & \cdots & \mathbf{B}_s
\end{bmatrix}%
\hspace{-0.2cm}
\begin{matrix}
    \coolrightbrace{x \\ \vspace{-0.2cm}x\\ y}{\rho_{\mathrm{e}}~\text{times}}
\end{matrix}.
\label{eqn:zhat_explicit}
\end{align}
Here, \eqref{eqn:z_hat} follows from the upsampling property of the 2D-DFT \cite{kak2001principles}. Observe that $\tilde{\mathbf{Z}}_s$ exhibits a comb-like structure as it contains repetitions of $\mathbf{B}_s$, that has a single non-zero entry. The pattern in $\tilde{\mathbf{Z}}_s$, however, cannot be realized in a phased array as its antenna domain representation $\tilde{\mathbf{C}}_s \notin \mathbb{Q}_q^{N \times N}$.
\item Express $\mathbf{P}_s$ as a weighted sum of $\rho_\mathrm{e}\rho_\mathrm{a}$ distinct 2D-circular shifts of $\tilde{\mathbf{C}}_s$. Let $\mathbf{J}\in\mathbb{R}^{N\times N}$ denote a circulant delay matrix with the first row of $(0,1,0,...,0)$. The subsequent rows of $\mathbf{J}$ are generated by circularly shifting the previous row by 1 unit. The $d$ circulant delay matrix is then $\mathbf{J}_d = \mathbf{J}\cdot\mathbf{J}\cdots\mathbf{J}$. The base AWM $\mathbf{P}_s$ in our construction is then
\begin{equation}\label{eqn:AWM_from_upsampled}
    \mathbf{P}_s = \sum\limits_{\ell=0}^{\rho_{\mathrm{e}}-1}\sum\limits_{m=0}^{\rho_{\mathrm{a}}-1}W^s_{\ell m}\mathbf{J}_{\ell}^{T}\tilde{\mathbf{C}}_s\mathbf{J}_{m}.
\end{equation}
We constraint the entries of the $\rho_{\mathrm{e}}\times \rho_{\mathrm{a}}$ weight matrix $\mathbf{W}^s$ in \eqref{eqn:AWM_from_upsampled} to $\mathbb{Q}_{q}$  so that $\mathbf{P}_s \in \mathbb{Q}_{q}^{N\times N}$. 
\end{itemize}
\par  A possible choice for $\mathbf{W}^s$ is to set its entries to the elements in $\mathbb{Q}_q$ at random. Although a random choice still illuminates the sector of interest, it does not necessarily lead to a \textit{uniform} illumination pattern across the directions within the sector. Specifically, the entries of $\mathbf{Z}_s$ within the set $\mathcal{A}_s$ will not necessarily have the same magnitude as we can also notice from the examples in Fig.~\ref{fig_5:sec_design_3steps}\textcolor{red}{(b)} and Fig.~\ref{fig_6:SpectralMaskWeights}\textcolor{red}{(b)}. In this paper, the weights in $\mathbf{W}^s$ are optimized such that the illumination, i.e., the transmitted power, is almost the same across all the directions within each sector. We will show in Section \ref{sec6_mse} that having a flat illumination pattern within a sector results in a tight bound on the reconstruction error of the in-sector channel via CS. 
\par We explain how the weights in $\mathbf{W}^s$ are optimized to achieve an almost flat illumination pattern within the sector $s$. We define an $N\times N$ matrix 
\begin{equation}
\label{eqn:Ts_definition}
\mathbf{T}^{s} = \sum\limits_{\ell=0}^{\rho_{\mathrm{e}}-1}\sum\limits_{m=0}^{\rho_{\mathrm{a}}-1}W^s_{\ell m}\mathbf{a}_{N}\left(2\pi \ell/N\right) \mathbf{a}_{N}^\mathrm{T}\left(2\pi m/N\right),
\end{equation}
where $\mathbf{a}_{N}(\omega)$ is the $N\times 1$ Vandermonde vector defined in \eqref{eqn:vandrmond}. The beamspace representation of the base AWM $\mathbf{P}_s$ in \eqref{eqn:AWM_from_upsampled}, referred to as the spectral mask $\mathbf{Z}_s$, is given by
\begin{align}
    \mathbf{Z}_s &= N\mathbf{U}_{N}^{*}\mathbf{P}_{s,\mathrm{FC}}\mathbf{U}_{N}^{*}\label{eqn:mask_from_circ_1}\\
    &= N[\mathbf{T}^s]^{\mathrm{c}}\odot\tilde{\mathbf{Z}}_s\label{eqn:mask_from_circ}.
\end{align}
We arrive at \eqref{eqn:mask_from_circ} by first applying flipping and conjugation operation to both sides of \eqref{eqn:AWM_from_upsampled}. Then, we use the property that circularly shifting a matrix is equivalent to modulation in the Fourier domain. 
Next, we observe from \eqref{eqn:zhat_explicit} that $\tilde{\mathbf{Z}}_s$ has equal magnitude entries within the sector $\mathcal{A}_s$ by our construction. For the base AWM $\mathbf{P}_s$ to achieve a uniform illumination over $\mathcal{A}_s$, we notice from \eqref{eqn:mask_from_circ} that the entries of $\mathbf{T}^{s}$ must have the same magnitude over $\mathcal{A}_s$. We define $\mathbf{T}_{\mathcal{A}s}^s$ as the $\rho_{\mathrm{e}}\times\rho_{\mathrm{a}}$ submatrix of $\mathbf{T}^s$, comprising entries from $\mathbf{T}^s$ at the indices in $\mathcal{A}_s$. Further, we define an $N\times N$ diagonal matrix $\mathbf{D}_{N}(\omega) = \mathrm{Diag}(\mathbf{a}_{N}(\omega))$ that contains the $N\times 1$ Vandermonde vector \eqref{eqn:vandrmond} on its diagonal. Then, we can rewrite \eqref{eqn:Ts_definition} as 
\begin{equation}\label{eqn:weighted_w}
    \mathbf{T}_{\mathcal{A}_s}^s = \mathbf{U}_{\rho_\mathrm{e}}^{*} \mathbf{D}_{\rho_\mathrm{e}}(2\pi k_{\mathrm{e}}/N)\mathbf{W}^{s}\mathbf{D}_{\rho_\mathrm{a}}(2\pi k_{\mathrm{a}}/N)\mathbf{U}_{\rho_\mathrm{a}}^{*}.
\end{equation}
The weight matrix $\mathbf{W}^{s}\in\mathbb{Q}_q^{\rho_{\mathrm{e}}\times\rho_{\mathrm{a}}}$ should be optimized such that $\mathbf{T}_{\mathcal{A}_s}^s$, i.e., the magnitude of the inverse DFT of $\mathbf{D}_{\rho_\mathrm{e}}(2\pi k_{\mathrm{e}}/N)\mathbf{W}^{s}\mathbf{D}_{\rho_\mathrm{a}}(2\pi k_{\mathrm{a}}/N)$ is flat. We can thus write the optimization problem for designing $\mathbf{W}^s$ as
\begin{equation}\label{eqn:opt_weight}
\mathcal{O}:
\underset{\mathbf{W}^{s}\in\mathbb{Q}_q^{\rho_{\mathrm{e}}\times \rho_{\mathrm{a}}},|\mathbf{V}|=\mathbf{1}}{{\mathrm{minimize}}} \ \ \ \|\mathbf{T}_{\mathcal{A}_s}^s-\mathbf{V}/\sqrt{\rho_{\mathrm{e}}\rho_{\mathrm{a}}}\|_{\mathrm{F}}.
\end{equation}
The above optimization problem makes sure that the magnitude of entries of $\mathbf{T}_{\mathcal{A}_s}^s$ are as close as possible to $1/\sqrt{\rho_{\mathrm{e}}\rho_{\mathrm{a}}}$ while $\mathbf{W}^s\in\mathbb{Q}_q^{\rho_{\mathrm{e}}\times \rho_{\mathrm{a}}}$.
\begin{figure}[t]
\centering  
\subfloat[Illumination pattern $|\mathbf{Z}_s|$ with the proposed construction and the use of optimized weights for $\mathbf{W}^s$.]{\includegraphics[trim=4.24cm 7cm 1.75cm 7cm, clip, scale=0.31]{./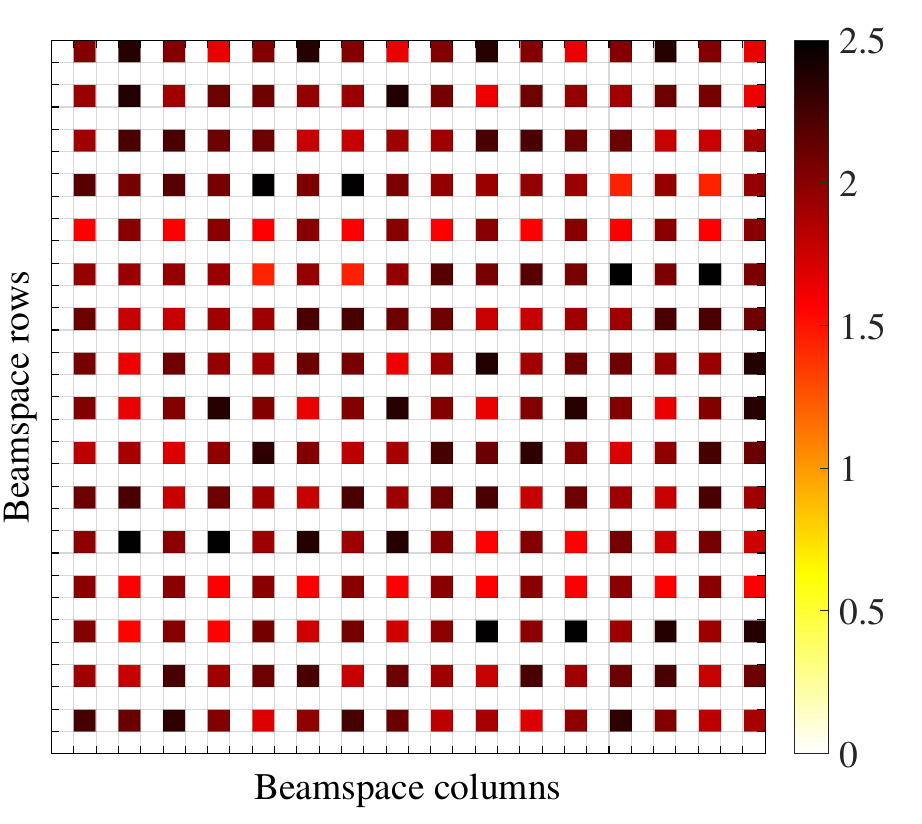}}
\hfil
\subfloat[Illumination pattern $|\mathbf{Z}_s|$ with the proposed construction and the use of random weights for $\mathbf{W}^s$.]{\includegraphics[trim=4.24cm 7cm 1.75cm 7cm, clip, scale=0.31]{./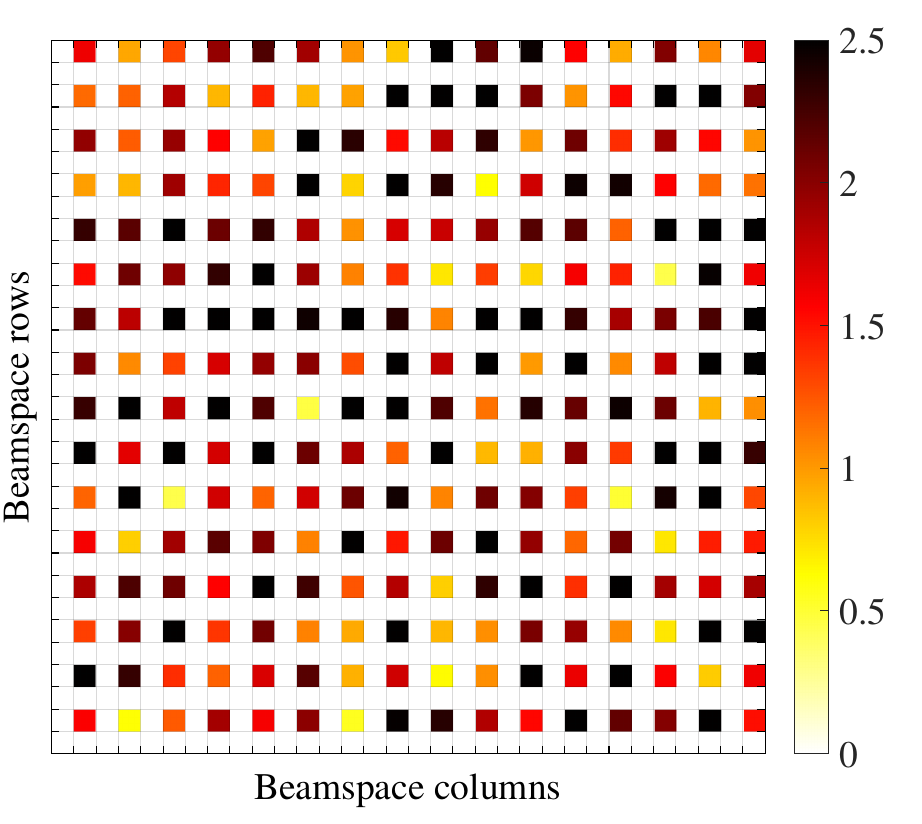}
}
\caption{\small This figure illustrates the spectral masks of our base AWMs designed for SLS. In this example, we use $N=32$, $N_{\mathrm{e}}=2$, $N_{\mathrm{a}}=2$, $k_{\mathrm{e}}=0$, $k_{\mathrm{a}}=1$, and $q=2$ bits. We observe from Fig. (a) and Fig. (b) that using the optimized $\mathbf{W}^s$ in our comb-like construction results in a more uniform illumination pattern than random weights. Here, $\underset{(i,j)\in\mathcal{A}_s}{{\max}}|Z_{s,ij}|/\underset{(i,j)\in\mathcal{A}_s}{{\min}}|Z_{s,ij}|$ is about $1.78$ for the pattern in Fig. (a) and is $8.67$ for the pattern in Fig. (b).\normalsize}\label{fig_6:SpectralMaskWeights} 
\end{figure}
\par In our work, we develop a simple adaptation of the PeCAN algorithm proposed in \cite{stoica2009designing} to optimize $\mathbf{W}^s$ in \eqref{eqn:opt_weight}. The PeCAN algorithm is an iterative approach to design signals with a flat spectral magnitude. Our adaption incorporates the diagonal matrices $\mathbf{D}_{\rho_\mathrm{e}}(2\pi k_{\mathrm{e}}/N)$ and $\mathbf{D}_{\rho_\mathrm{a}}(2\pi k_{\mathrm{a}}/N)$ into the objective of the PeCAN algorithm. Further, the weights at the end of each iteration are quantized so that $\mathbf{W}^{s}\in\mathbb{Q}_q^{\rho_{\mathrm{e}} \times \rho_{\mathrm{a}}}$. As seen in Fig.~\ref{fig_6:SpectralMaskWeights}, the magnitude profile of the spectral mask $|\mathbf{Z}_s|$ corresponding to the optimized weights has a more uniform illumination within the sector than the one that uses random weights. 
Finally, each sector's weight matrix $\mathbf{W}^s$ is designed independent of the other sectors. The design is optimized only once using the modified PeCAN algorithm and the optimized AWM for the SLS codebook can be stored.
\begin{figure}[t]
\centering
\subfloat[$S\!=\!2$ sectors and $q\!=\!1$ bit.]{\includegraphics[trim=1.5cm 6cm 2cm 7.5cm, clip, scale=0.3]{./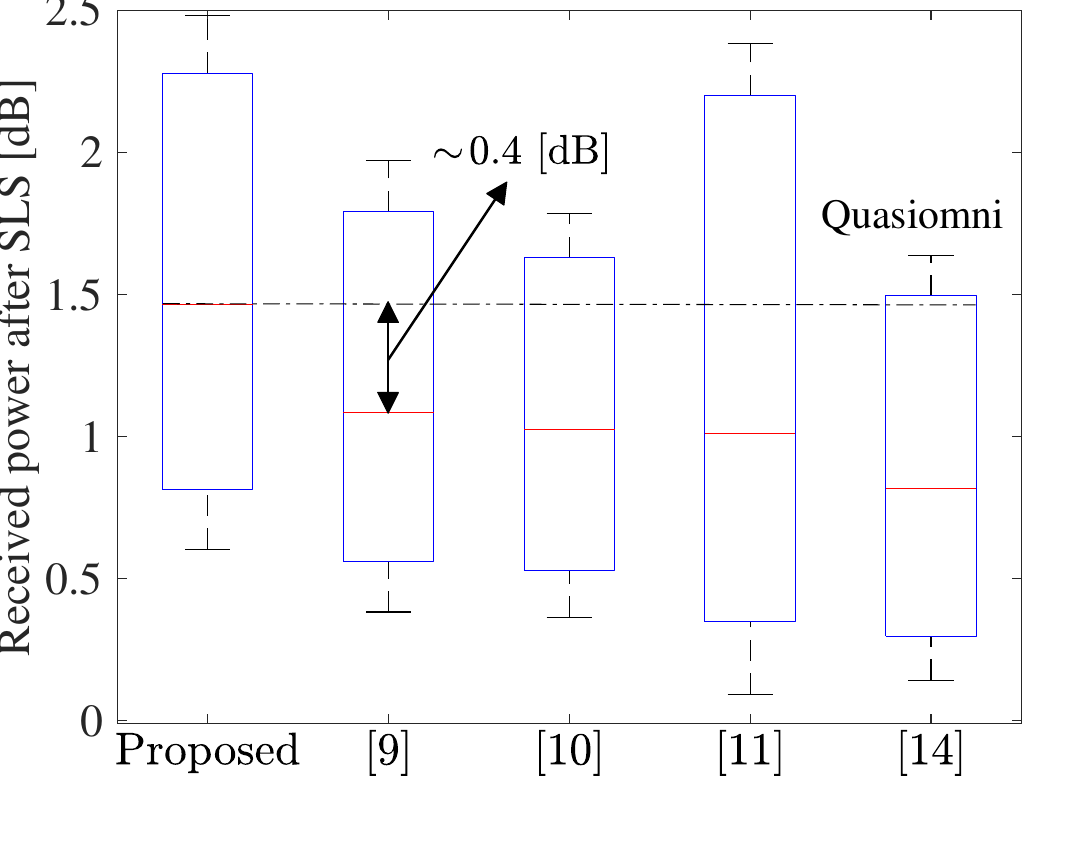}}
\hfil
\subfloat[$S\!=\!4$ sectors and $q\!=\!1$ bit.]{\includegraphics[trim=1.75cm 6cm 2cm 7.5cm, clip, scale=0.3]{./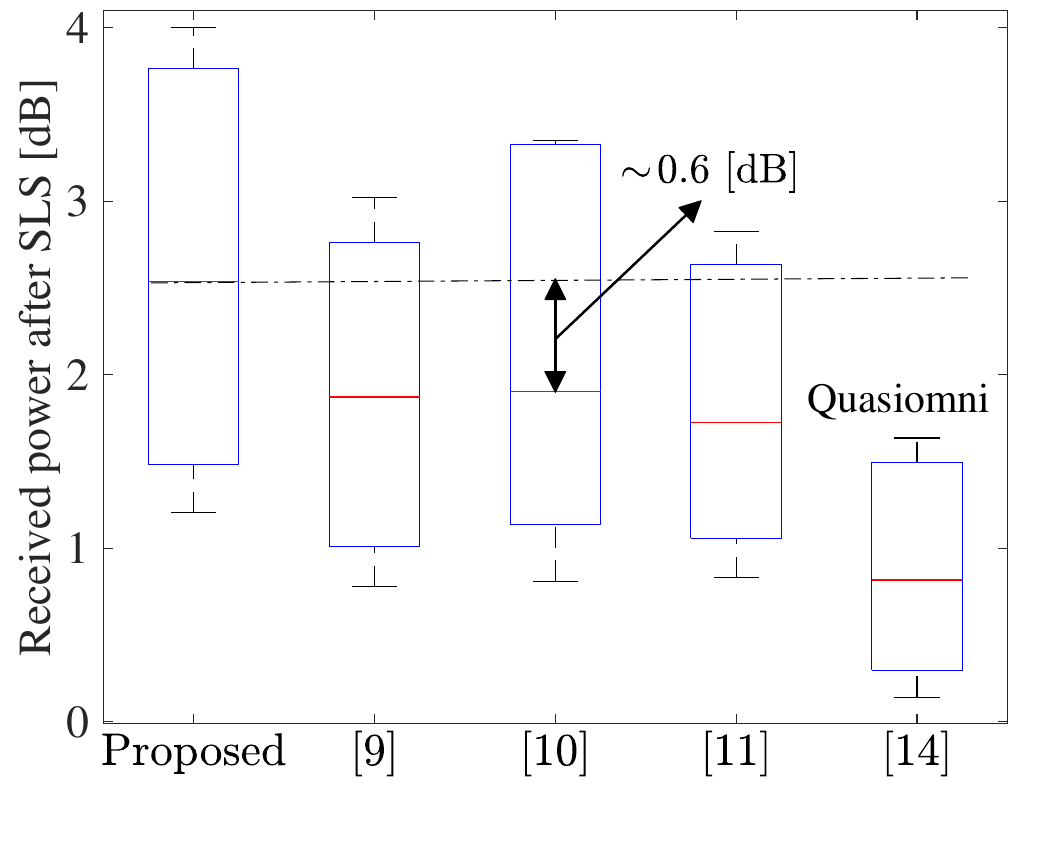}
}
\hfil
\subfloat[$S\!=\!16$ sectors and $q\!=\!2$ bits.]{\includegraphics[trim=1.55cm 6cm 2cm 7.5cm, clip, scale=0.3]{./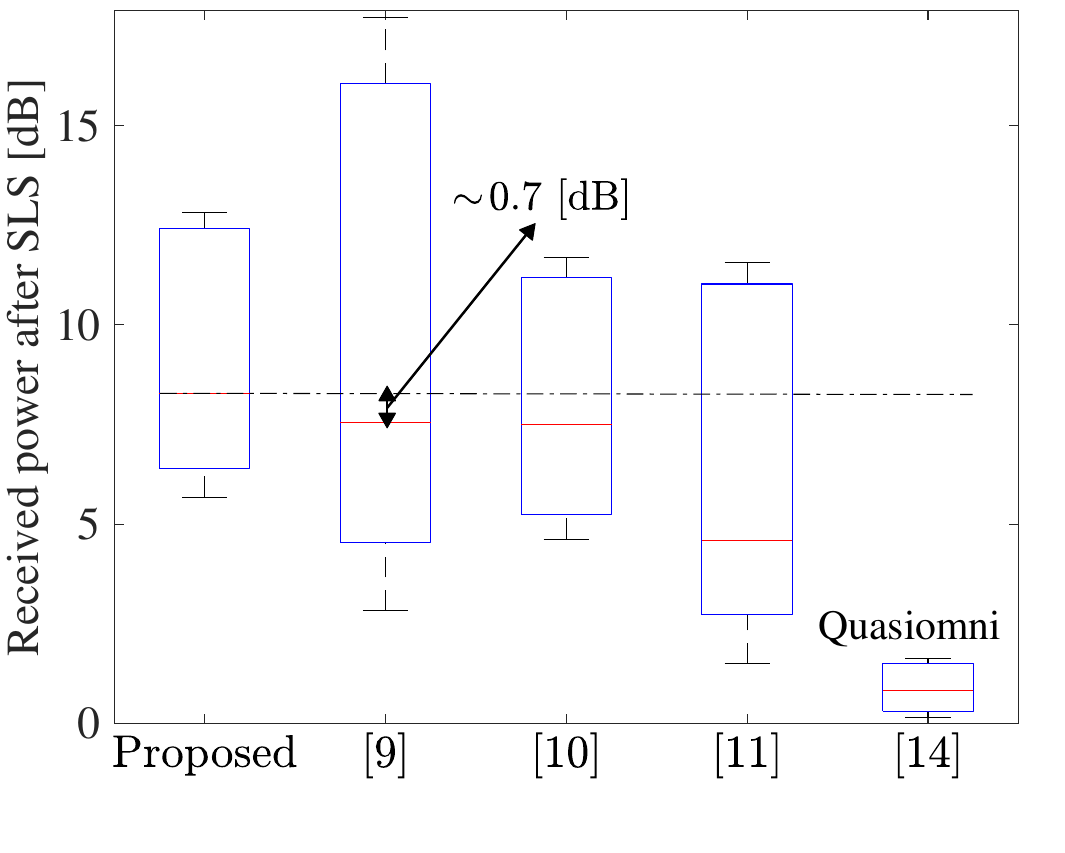}
}
\caption{\small Our designed base AWMs achieve a higher median received power after SLS than the ones in 
\cite{tsai2018structured,xiao2018enhanced,raviteja2017analog} and the quasi-omnidirectional beams from \cite{myers2019falp}. These plots are obtained for a $32\times 32$ array at the TX. The received power is computed for $100$ channel realizations from the NYU channel simulator at $60$ GHz and $\sigma^2=0$. \normalsize}\label{fig:boxplot} 
\end{figure}
\par In Fig.~\ref{fig:boxplot}, we compare the received signal strength after SLS using the proposed comb-like sector codebook with those in \cite{tsai2018structured,xiao2018enhanced,raviteja2017analog}. The AWMs in \cite{tsai2018structured,xiao2018enhanced,raviteja2017analog} have contiguous beam patterns and to meet the low-resolution constraint of the phase shifters, we have quantized the corresponding AWMs when necessary. We note that these methods may select a different sector as the best sector, depending on the beam patterns used in SLS. The received power after SLS is proportional to the SNR in the best estimated sector. We observe from Fig.~\ref{fig:boxplot} that our proposed AWMs achieve a higher received power after SLS than the other methods. This is because our comb-like sectors are able to concentrate the transmitter's energy perfectly within the sectors without any leakage, unlike the other methods. After the sector of interest is found using SLS, the TX performs in-sector CS-based channel estimation within this sector for beam alignment.
\section{In-sector Compressed Sensing}
\par To achieve the full beamforming gain of $10\, \mathrm{log}_{10} N^2$, the TX needs to further refine the AWM within the sector of interest $\mathcal{A}_0$. This beam refinement can be performed after estimating the $N^2/S$ dimension in-sector channel. In this section, we explain how 2D-CCS can be used to acquire channel measurements within the sector of interest. Then, we discuss how to construct an ensemble of AWMs $\{\Po[m]\}_{m=0}^{M-1}$ that all focus on the sector interest $\Ao$. Finally, we provide details on the subset of AWMs optimized to reduce aliasing artifacts within the sector of interest.

\subsection{Preliminaries on 2D-CCS}\label{sec_InCS}
At the end of SLS, the TX can apply $\Po$ to illuminate the beamspace indices in $\Ao$, i.e., 
\begin{equation}
\label{eq:defn_sector}
    \Ao=\{(p,q): (\mathbf{U}^{\ast}_N \Po \mathbf{U}^{\ast}_N )_{pq} \neq 0\}.
\end{equation}
The use of $\Po$ at the TX, however, results in a single spatial channel measurement, which is insufficient to estimate the $N^2/S$ entries of the beamspace in $\Ao$. To estimate these entries, multiple AWMs that all focus on the same sector must be designed and applied at the TX. One way to construct such AWMs is to circularly shift $\Po$ and obtain possibly distinct AWMs. This approach works for in-sector CS because circularly shifting a matrix does not change the magnitude of its 2D-DFT \cite{jain1989fundamentals}. The resulting method in which the TX applies circular shifts of an AWM for the RX to acquire channel measurements is called 2D-CCS \cite{myers2019falp}. Hence, with 2D-CCS, the $M$ in-sector channel measurements $\{y_\mathrm{o}[m]\}_{m=0}^{M-1}$ are obtained by applying $M$ distinct circular shifts of $\Po$ at the TX during BRP. 
\par Now, we discuss mathematical preliminaries on 2D-CCS \cite{li2012convolutional}. With 2D-CCS, the TX can possibly apply $N^2$ different 2D-circular shifts of $\Po$ for in-sector channel estimation. As applying all possible circular shifts results in a substantial measurement overhead, the TX only applies $M<N^2$ of these circular shifts to obtain channel measurements. These measurements are used together with a sparse prior for channel estimation. As the AWMs in 2D-CCS are all circular shifts of the base AWM $\Po$, the channel measurements can be interpreted as a subsampled circular convolution of the channel and $\Po$ \cite{myers2019falp}. We use $\Omega = \{(r[m],c[m])\}_{m=0}^{M-1}$ to denote the set of $M$ circular shifts of $\Po$ applied at the TX to acquire the in-sector channel measurements. Specifically, $\Po[m]$ is constructed by circularly shifting $\Po$ by $r[m]\in[N]$ units along its rows, and then circularly shifting the result by $c[m]\in[N]$ units along its columns. We use $\mathbf{H}\circledast\Po$ to denote the 2D-circular convolution \cite{kak2001principles} of $\mathbf{H}$ and $\Po$. We define $\mathcal{P}_{\Omega}(\mathbf{A})$ as the subsampling operation that returns a vector of size $|\Omega|\times 1$ containing the entries of $\mathbf{A}$ at locations in $\Omega$. When the TX applies circulant shifts of $\Po$ according to $\Omega$, the vector of $M$ in-sector channel measurements acquired by the RX is
\begin{equation}\label{eqn:2D_CCS_measure}
    \mathbf{y} = \mathcal{P}_{\Omega}(\mathbf{H}\circledast\Po) + \mathbf{v}.
\end{equation}
We observe that the channel measurements in 2D-CCS are determined by the base AWM $\Po$ and the set of circular shifts $\Omega$.
\par We now describe the in-sector sensing structure in \eqref{eqn:2D_CCS_measure} by expressing $\mathbf{H}$ and $\Po$ in the 2D-DFT domain. Similar to \eqref{eqn:mask_from_circ_1}, the spectral mask $\Zo$ corresponding to $\Po$ is defined as 
\begin{equation}\label{eqn:mask}
    \Zo = N\mathbf{U}_{N}^{*}\mathbf{P}_{\mathrm{o},\mathrm{FC}}\mathbf{U}_{N}^{*}.
\end{equation}
From \eqref{eq:defn_sector} and the properties of the 2D-DFT, it can be shown that $Z_\mathrm{o}(k,\ell) \neq 0$ if $(k, \ell) \in \Ao$ and  $Z_\mathrm{o}(k,\ell) = 0$ if $(k, \ell) \notin \Ao$. We define the masked beamspace $\mathbf{R}_\mathrm{o}$ as the entry-wise product of the spectral mask $\Zo$ and the beamspace $\mathbf{X}$, i.e.,
\begin{equation}\label{eqn:msked_beam_R}
    \mathbf{R}_\mathrm{o} = \mathbf{X}\odot\Zo.
\end{equation}
As $\Zo$ contains only $|\Ao|$ number of non-zeros, the masked beamspace $\mathbf{R}_\mathrm{o}$ in \eqref{eqn:msked_beam_R} has utmost $|\Ao|$ non-zeros. Under the assumption that $|\Ao|\sim O(N^2)$, $\mathbf{R}_\mathrm{o}$ exhibits a sparse structure due to the sparsity in $\mathbf{X}$. Now, we use the property that the 2D-DFT of the 2D circular convolution of two matrices is equal to the scaled entry-wise product of their 2D-DFTs \cite{kak2001principles} to rewrite \eqref{eqn:2D_CCS_measure} as
\begin{equation}\label{eqn:2D_CS_measure}
    \mathbf{y} = \mathcal{P}_{\Omega}(\mathbf{U}_{N}\mathbf{R}_\mathrm{o}\mathbf{U}_{N}) + \mathbf{v}.
\end{equation}
We observe from \eqref{eqn:2D_CS_measure} that the channel measurements in 2D-CCS are simply a subsampled 2D-DFT of a sparse masked beamspace. Here, the mask is the discrete beam pattern corresponding to the base AWM $\Po$, and the subsampling locations in $\Omega$ are the circular shifts applied at TX.
\par Now, we explain how in-sector channel estimation is performed with the 2D-CCS-based measurements. Any algorithm that uses the measurements $\mathbf{y} = \mathcal{P}_{\Omega}\left(\mathbf{U}_{N}(\mathbf{X} \odot \Zo)\mathbf{U}_{N}\right)$ in \eqref{eqn:2D_CCS_measure} can only estimate the entries of $\mathbf{X}$ at the indices in $\Ao$, since the spectral mask $\Zo$ blanks out entries of $\mathbf{X}$ outside $\Ao$. In this paper, we use the OMP, a greedy CS algorithm, to obtain the beamspace estimate $\hat{\mathbf{X}}_\mathrm{o}$ from the $M$ in-sector measurements. The channel estimate within the sector of interest is then $\hat{\mathbf{H}}_\mathrm{o}=\mathbf{U}_N \hat{\mathbf{X}}_\mathrm{o} \mathbf{U}_N$.
\par The success of sparse recovery from partial 2D-DFT measurements in \eqref{eqn:2D_CS_measure} depends on the choice of the subsampling locations in $\Omega$. Prior work has demonstrated that subsampling sets chosen at random allow sparse recovery with a high probability \cite{candes2007sparsity}. It is also known that random subsampling often results in aliasing artifacts that are almost uniformly spread over the support of the reconstruction, i.e., $[N] \times [N]$ \cite{lustig2007sparse}. Many CS algorithms, like the OMP, iteratively cancel the aliasing artifacts to reconstruct the sparse signal. In our in-sector CS problem, a uniformly spread aliasing profile is undesirable because the sparse masked beamspace signal in \eqref{eqn:2D_CS_measure} is non-zero only at the indices in $\Ao$. To minimize the reconstruction error, the subsampling set $\Omega$ should be constructed such that aliasing artifacts over $\Ao$ are kept small, while the artifacts outside $\Ao$ are not critical since the masked beamspace is known to be zero in those regions. We construct such sets in Sec. \ref{sec5_CircShift}.

\subsection{Proposed circular shifts for in-sector CS}\label{sec5_CircShift}
\par We discuss our procedure to optimize the set of the circular shifts in $\Omega$ to reduce the aliasing artifacts within the sector of interest $\Ao$. 
\par To aid our optimization, we write down the partial 2D-DFT-based measurement model, i.e., 
\begin{equation}
\label{eqn:partial2DDFT}
\mathbf{y} = \mathcal{P}_{\Omega}(\mathbf{U}_{N}\left(\Zo\odot\mathbf{X}\right)\mathbf{U}_{N}) + \mathbf{v}, 
\end{equation}
using \eqref{eqn:2D_CS_measure} and \eqref{eqn:msked_beam_R}. This measurement model is equivalent to a linear measurement model of the $N^2\times 1$ sparse vector $\mathbf{x}=\mathrm{vec}\left(\mathbf{X}\right)$. To represent the linear model explicitly, we replace the sampling operator $\mathcal{P}_{\Omega}(\cdot)$ with an $M\times N^2$ subsampling matrix $\mathbf{S}$ which is equal to $1$ at locations in $\{(m,Nc[m]+r[m])\}_{m=0}^{M-1}$ and zero at the remaining locations. Thus, $\mathbf{S}$ consists of exactly $|\Omega|=M$ ones representing the circular shift indices within $\Omega$. We use $\mathbf{z}_{\mathrm{o}}=\mathrm{vec}\left(\Zo\right)$ to denote the $N^2\times 1$ vector form of the spectral mask $\Zo$. The CS matrix that compresses $\mathbf{x}$ to $\mathbf{y}$ can be determined from \eqref{eqn:partial2DDFT} as 
\begin{equation}\label{eqn:cs_mat}
    \mathbf{A}_\mathrm{o} = \mathbf{S}\left(\mathbf{U}_N\otimes\mathbf{U}_N\right)\mathrm{Diag}(\mathbf{z}_\mathrm{o}).
\end{equation}
The linear measurement model in \eqref{eqn:2D_CS_measure} can be rewritten as 
\begin{equation}\label{eqn:1D_CS_measure}
    \mathbf{y} = \mathbf{A}_\mathrm{o}\mathbf{x} + \mathbf{v}.
\end{equation}
Note that the set of the circular shifts $\Omega = \{(r[m],c[m])\}_{m=0}^{M-1}$ determines the subsampling matrix $\mathbf{S}$ and therefore the CS matrix $\mathbf{A}_\mathrm{o}$.
\par The performance of greedy CS algorithms for sparse recovery usually depends on the mutual coherence of the CS matrix \cite{ben2010coherence}. The mutual coherence is simply the maximum of the normalized inner product between the columns of the CS matrix. It was shown in \cite{ben2010coherence} that minimizing the mutual coherence of the CS matrix results in a tight upper bound for the MSE in the sparse estimate. To this end, we optimize the set of the circulant shifts $\Omega$ to minimize the mutual coherence of the CS matrix in our problem.

\par The CS matrix in our in-sector channel estimation problem has a special structure due to our comb-like construction for the sectors. We define  $\mathcal{L}_\mathrm{o}=\{qN+p:(p,q)\in\Ao\}$ as the set of the 1D indices where the comb-like spectral mask $\Zo$ is known to be non-zero. Specifically, for the sector of interest $\Ao$, the vectorized spectral mask $\mathbf{z}_\mathrm{o}$ is non-zero at the locations in $\mathcal{L}_\mathrm{o}$ and zero at the other locations $[N^2]\backslash\mathcal{L}_\mathrm{o}$. Therefore, we can equivalently write the measurements \eqref{eqn:1D_CS_measure} as 
\begin{equation}\label{eqn:1D_effective_measure}
    \mathbf{y} = \mathbf{A}_{\mathcal{L}_\mathrm{o}}\mathbf{x}_{\mathcal{L}_\mathrm{o}} + \mathbf{v}.
\end{equation}
In \eqref{eqn:1D_effective_measure}, $\mathbf{A}_{\mathcal{L}_\mathrm{o}}$ is the $M\times \rho_{\mathrm{e}}\rho_{\mathrm{a}}$ submatrix of $\mathbf{A}_\mathrm{o}$ obtained by retaining the columns with indices in $\mathcal{L}_\mathrm{o}$, and $\mathbf{x}_{\mathcal{L}_\mathrm{o}}$ is a subvector of $\mathbf{x}$ with the indices in $\mathcal{L}_\mathrm{o}$. So, the in-sector CS problem is to estimate this $\rho_{\mathrm{e}}\rho_{\mathrm{a}} = N^2/S$ dimension sparse vector $\mathbf{x}_{\mathcal{L}_\mathrm{o}}$ from the $M$ measurements. Also, sparse recovery in our in-sector CS problem $\Ao$ depends on the coherence of $\mathbf{A}_{\mathcal{L}_\mathrm{o}}$ rather than $\mathbf{A}_\mathrm{o}$.
\par Now, we use the notion of the point spread function (PSF) in partial 2D-DFT CS problem \cite{lustig2007sparse} to aid our construction of the sampling set $\Omega$. We define the $N\times N$ binary matrix $\mathbf{N}_{\Omega}$ that is $1$ at the indices $\Omega=\{(r[m],c[m])\}_{m=0}^{M-1}$ and is zero at other indices. Therefore, $\mathbf{N}_{\Omega}$ has exactly $M$ entries with the value of $1$. The corresponding PSF is \cite{lustig2007sparse,myers2019falp}
\begin{equation}\label{eqn:psf}
    \mathbf{\mathrm{PSF}}_\mathrm{o} = \frac{N}{M}\mathbf{U}_N^{*}\mathbf{N}_{\Omega}\mathbf{U}_N^{*}.
\end{equation}
In a standard partial 2D-DFT CS problem, the coherence of the CS matrix is simply the maximum sidelobe level of the PSF. In our in-sector CS problem, however, only the sidelobes within the sector of interest matter. With our comb-like sectors constructed according to \eqref{eqn:comb_sec}, it can be shown that the coherence of the effective CS matrix $\mathbf{A}_{\mathcal{L}_\mathrm{o}}$ is
\begin{equation}\label{eqn:eff_coherence_2}
    \mu_\mathrm{o} = \underset{\{(i, j)\in\mathcal{T}\}}{\max}\ \ \lvert\mathbf{\mathrm{PSF}}_\mathrm{o}(i,j)\rvert,
\end{equation}
where $\mathcal{T}=\{(i,j):~i=nN_{\mathrm{e}},~j=mN_{\mathrm{a}},~(n,m)\in[\rho_{\mathrm{e}}]\times[\rho_\mathrm{a}]\backslash (0,0)\}$ for any comb-like sector of interest $\Ao$ of the form in \eqref{eqn:comb_sec}. The maximum sidelobe level of the PSF at the locations in $\mathcal{T}$ determines the mutual coherence of the CS matrix in our in-sector CS problem.

\par Now, we formulate the coherence minimization problem as
\begin{equation}\label{eqn:opt}
\mathcal{P}:
  \begin{cases}
    \begin{aligned}
        &\underset{\mathbf{N}_{\Omega}\in\{0,1\}^{N\times N}}{{\min}}
        && \underset{(i,j)\in\mathcal{T}}{{\max}} \ \ \ |\mathrm{PSF}_\mathrm{o}(i,j)|\\ 
        &\text{{s.t.}} && \sum_{(i,j)\in[N]\times[N]}N_{\Omega}(i,j)=M.
    \end{aligned}
  \end{cases}
\end{equation}
The problem $\mathcal{P}$ is non-convex and hard to solve. In Lemma \ref{lemma1}, however, we present an optimal solution $\mathbf{N}_{\Omega}^{\mathrm{Nyq}}$ that achieves $\mu_\mathrm{o}=0$ when $M=\rho_{\mathrm{e}}\rho_{\mathrm{a}}$. This case corresponds to the Nyquist sampling criterion where the number of measurements $M$ is equal to the sector dimension $N^2/S$. For the sub-Nyquist regime where $M<\rho_{\mathrm{e}}\rho_{\mathrm{a}}$, our randomized subsampling technique just selects $M$ elements at random from the Nyquist sampling set.
\begin{lemma}\label{lemma1}
For $M = \rho_{\mathrm{e}}\rho_{\mathrm{a}}$, the matrix $\mathbf{N}_{\Omega}^{\mathrm{Nyq}}$ corresponding to $\Omega = [\rho_{\mathrm{e}}]\times[\rho_{\mathrm{a}}]$, with $\rho_{\mathrm{e}}$ and $\rho_{\mathrm{a}}$ defined in \eqref{eqn:rho_e} and \eqref{eqn:rho_a}, is an optimal solution of $\mathcal{P}$ since it achieves $\mu_\mathrm{o}=0$ in \eqref{eqn:eff_coherence_2}.
\end{lemma}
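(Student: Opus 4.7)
The plan is to plug the candidate sampling set $\Omega = [\rho_{\mathrm{e}}] \times [\rho_{\mathrm{a}}]$ into the PSF definition in \eqref{eqn:psf} and verify that $\mathrm{PSF}_\mathrm{o}(i,j) = 0$ at every $(i,j) \in \mathcal{T}$. Since $\mu_\mathrm{o}$ in \eqref{eqn:eff_coherence_2} is a maximum of absolute values and hence non-negative, reaching the value $0$ is automatically a global minimum of $\mathcal{P}$, which will certify that $\mathbf{N}_{\Omega}^{\mathrm{Nyq}}$ is optimal.

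First, I would expand \eqref{eqn:psf} entrywise. With $\mathbf{N}_{\Omega}^{\mathrm{Nyq}}$ supported on the top-left $\rho_{\mathrm{e}} \times \rho_{\mathrm{a}}$ block, the double sum for $\mathrm{PSF}_\mathrm{o}(i,j)$ collapses into a product of two one-dimensional geometric sums,
\begin{equation*}
\mathrm{PSF}_\mathrm{o}(i,j) \;=\; \frac{1}{M}\left(\sum_{r=0}^{\rho_{\mathrm{e}}-1} e^{\mathsf{j}2\pi i r/N}\right)\left(\sum_{c=0}^{\rho_{\mathrm{a}}-1} e^{\mathsf{j}2\pi j c/N}\right),
\end{equation*}
absorbing the $1/N$ normalizations of the two $\mathbf{U}_N^{*}$ factors into the prefactor $N/M$. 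This separable structure is the crucial observation: it arises precisely because the Nyquist sampling set is a Cartesian product, which is why any deviation from $\mathbf{N}_{\Omega}^{\mathrm{Nyq}}$ would in general destroy the clean factorization.

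Next, I would restrict to $(i,j) \in \mathcal{T}$, i.e., $i = n N_{\mathrm{e}}$ and $j = m N_{\mathrm{a}}$ with $(n,m) \in [\rho_{\mathrm{e}}] \times [\rho_{\mathrm{a}}] \setminus \{(0,0)\}$. Using \eqref{eqn:rho_e}--\eqref{eqn:rho_a}, the exponents reduce to $2\pi n r/\rho_{\mathrm{e}}$ and $2\pi m c/\rho_{\mathrm{a}}$, so the two factors become full sums of $\rho_{\mathrm{e}}$-th and $\rho_{\mathrm{a}}$-th roots of unity. By the standard orthogonality relation, $\sum_{r=0}^{\rho_{\mathrm{e}}-1} e^{\mathsf{j}2\pi n r/\rho_{\mathrm{e}}}$ vanishes unless $n \equiv 0 \pmod{\rho_{\mathrm{e}}}$, which for $n \in [\rho_{\mathrm{e}}]$ forces $n=0$; the analogous statement holds for the column sum. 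Because $(n,m) \neq (0,0)$, at least one of the two factors is zero, so the product vanishes and $\mathrm{PSF}_\mathrm{o}(i,j) = 0$ throughout $\mathcal{T}$.

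The argument is essentially a direct calculation, so there is no substantive obstacle. The only places that need care are keeping the conjugation/normalization conventions of $\mathbf{U}_N$ consistent across the two factors, and invoking the standing divisibility hypothesis that $N_{\mathrm{e}}$ and $N_{\mathrm{a}}$ divide $N$ so that the reduction to primitive $\rho_{\mathrm{e}}$-th and $\rho_{\mathrm{a}}$-th roots of unity is clean. Optimality is then immediate: the constraint $\sum_{(i,j)} N_{\Omega}(i,j) = M$ is satisfied by construction since $|[\rho_{\mathrm{e}}] \times [\rho_{\mathrm{a}}]| = \rho_{\mathrm{e}}\rho_{\mathrm{a}} = M$, and $\mu_\mathrm{o} \geq 0$ for any feasible $\mathbf{N}_{\Omega}$, so $\mathbf{N}_{\Omega}^{\mathrm{Nyq}}$ attains the global minimum of $\mathcal{P}$.
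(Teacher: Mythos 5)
Your proof is correct and takes essentially the same route as the paper: both arguments reduce to the vanishing of the one-dimensional sums of $\rho_{\mathrm{e}}$-th and $\rho_{\mathrm{a}}$-th roots of unity at the comb locations in $\mathcal{T}$, the paper packaging this as the sampled discrete-space Fourier transform $\Gamma$ of the zero-padded all-ones block while you expand the PSF directly into the product of two geometric sums. Your feasibility check ($|\Omega|=\rho_{\mathrm{e}}\rho_{\mathrm{a}}=M$) and the observation that $\mu_{\mathrm{o}}\geq 0$ makes the value $0$ globally optimal likewise match the paper's conclusion.
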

\begin{proof}
See Section \ref{app:1a}. 
\end{proof}
\par We now discuss the Nyquist sampling criteria for the $N^2$ dimension channel estimation problem and the $N^2/S$ dimension in-sector channel estimation problem. In the former case, the Nyquist sampling criterion is $M=N^2$ and the optimal solution $\mathbf{N}_{\Omega}$ for $\mathcal{P}$ is an $N\times N$ matrix of all ones corresponding to $\Omega\! =\! [N]\times[N]$. The PSF corresponding to $\Omega = [N]\times[N]$ can be shown to be zero at all entries except $(0,0)$. For the in-sector CCS problem with our comb-like sectors, the ideal PSF need not be zero at all these locations to achieve $\mu_\mathrm{o}=0$. Instead, it only needs to be zero at the locations defined by $\mathcal{T}$ in \eqref{eqn:eff_coherence_2}. In Lemma \ref{lemma1}, we propose a construction for $\Omega$ that achieves $\mu_\mathrm{o}=0$ for $M=N^2/S$. This set simply comprises the 2D-coordinates of any $\rho_{\mathrm{e}} \times \rho_{\mathrm{a}}$ contiguous block on the $[N] \times [N]$ grid. 

\par We use $\mathbf{N}_{\Omega}^{\mathrm{Nyq}}$ to denote the optimal solution of $\mathcal{P}$ when $M=N^2/S$. In the sub-Nyquist regime, i.e., $M<\rho_{\mathrm{e}}\rho_{\mathrm{a}}$, we propose to select $M$ entries from the Nyquist set of circular shifts in Lemma \ref{lemma1}, i.e., $[\rho_{\mathrm{e}}]\times[\rho_{\mathrm{a}}]$, uniformly at random without replacement. We refer to this method as the proposed circular shifts (PCS) for $M<\rho_{\mathrm{e}}\rho_{\mathrm{a}}$, i.e., the subsampling set $\Omega$ in PCS is a subset of the optimal set in Lemma \ref{lemma1} for $M=\rho_{\mathrm{e}}\rho_{\mathrm{a}}$. Hence, PCS converges to the optimal solution proposed in Lemma \ref{lemma1} as $M \rightarrow \rho_{\mathrm{e}}\rho_{\mathrm{a}}$. We would like to mention that PCS is different from the fully random circular shifts (RCS) used in \cite{myers2019falp}. With RCS, the $M$ circular shifts are chosen at random from $[N]\times [N]$, while PCS selects the $M$ shifts at random from $[\rho_{\mathrm{e}}]\times[\rho_{\mathrm{a}}]$. 

\par To illustrate the effectiveness of our PCS in the sub-Nyquist regime $M < \rho_{\mathrm{e}}\rho_{\mathrm{a}}$ over RCS, we examine the mutual coherence of the CS matrices. In Fig. \ref{fig_6:PSFs}\textcolor{red}{(a)} and \ref{fig_6:PSFs}\textcolor{red}{(b)}, we compare the entry-wise magnitude of the PSF \eqref{eqn:psf} obtained from PCS and RCS in the Nyquist regime $M=N^2/S$. We observe that with PCS, the PSF is $0$ at the indices of interest, i.e., $\mathcal{T}$, while RCS which selects $M$ indices at random from $[N]\times[N]$ results in a higher coherence $\mu_\mathrm{o}$. Next, we plot the cumulative distribution function (CDF) of $\mu_\mathrm{o}$ for the two sampling schemes in Fig.~\ref{fig_6:PSFs}\textcolor{red}{(c)}. We notice from the CDF that PCS achieves a smaller coherence for the effective CS matrix $\mathbf{A}_{\mathcal{L}_\mathrm{o}}$ than RCS. Finally, we show an instance of PCS and RCS in Fig.~\ref{fig_7:sampling} to distinguish the difference between the two sampling schemes. 

\begin{figure}[t]
\centering
\subfloat[$|\mathrm{PSF}|$ for the proposed circulant shifts (PCS) with $\mu_{\mathrm{o}}= 0$ from \eqref{eqn:eff_coherence_2}.]{\includegraphics[trim=4.24cm 7cm 1.75cm 7cm, clip, scale=0.31]{./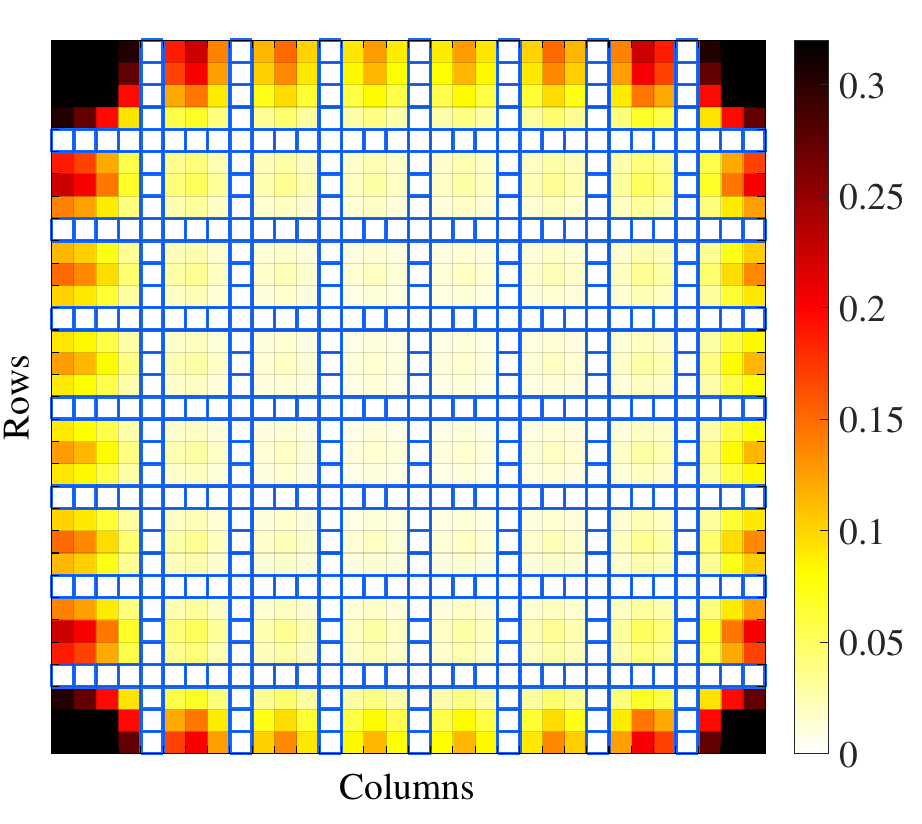}}
\hfil
\subfloat[$|\mathrm{PSF}|$ for the random circulant shifts (RCS) with $\mu_{\mathrm{o}}\approx 0.23$ from \eqref{eqn:eff_coherence_2}.]{\includegraphics[trim=4.24cm 7cm 1.75cm 7cm, clip, scale=0.31]{./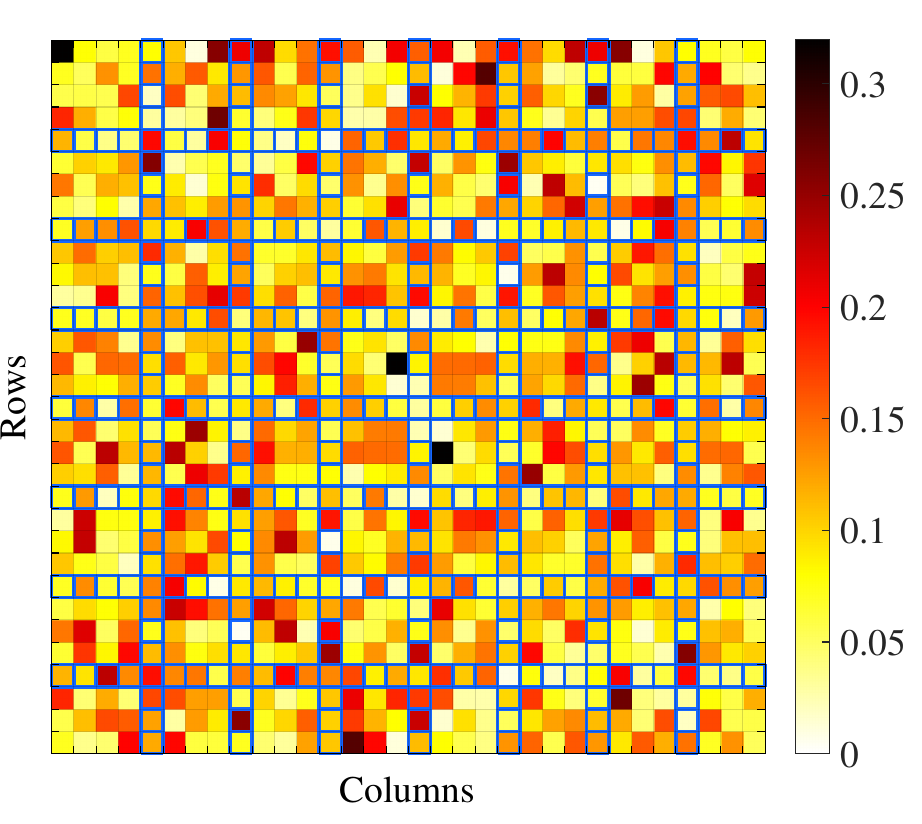}
}
\hfil
\subfloat[CDF of the coherence of the effective CS matrix $\mathbf{A}_{\mathcal{L}_\mathrm{o}}$ for PCS and RCS.]{\includegraphics[trim=1.75cm 6.6cm 2.25cm 7.45cm, clip, scale=0.31]{./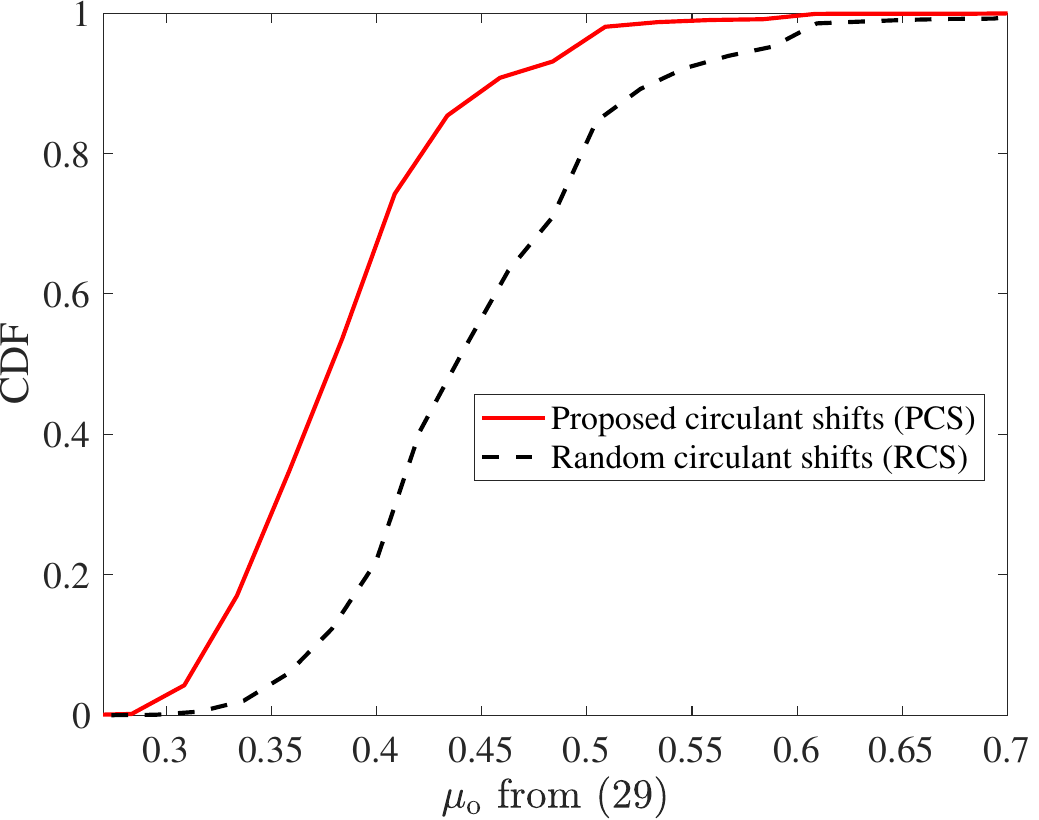}
}
\caption{\small Proposed subsampling and fully random subsampling for $N=32$, $N_{\mathrm{e}}=4$, $N_{\mathrm{a}}=4$, $S=N_{\mathrm{e}}N_{\mathrm{a}}$, $\rho_{\mathrm{e}}=N/N_{\mathrm{e}}$, $\rho_{\mathrm{a}}=N/N_{\mathrm{a}}$. In (a) and (b), we assume $M=\rho_{\mathrm{e}}\rho_{\mathrm{a}}$ and the entries indicated by \textcolor[rgb]{0.0824 0.4039 1}{\scriptsize{$\square$}} have indices in $\mathcal{T}$. The PSF for PCS achieves $\mu_{\mathrm{o}}=0$, which corresponds to zero aliasing artifacts under Nyquist sampling. For the same number of measurements, however, the PSF with RCS results in a positive $\mu_{\mathrm{o}}$. We observe from (c) that the coherence with PCS is lower than that with RCS in the sub-Nyquist regime for $M=20$. \normalsize}\label{fig_6:PSFs} 
\end{figure}
\begin{figure}[t]
\centering
{\includegraphics[trim=2.4cm 5.6cm 3.9cm 0.7cm, clip, scale=0.48]{./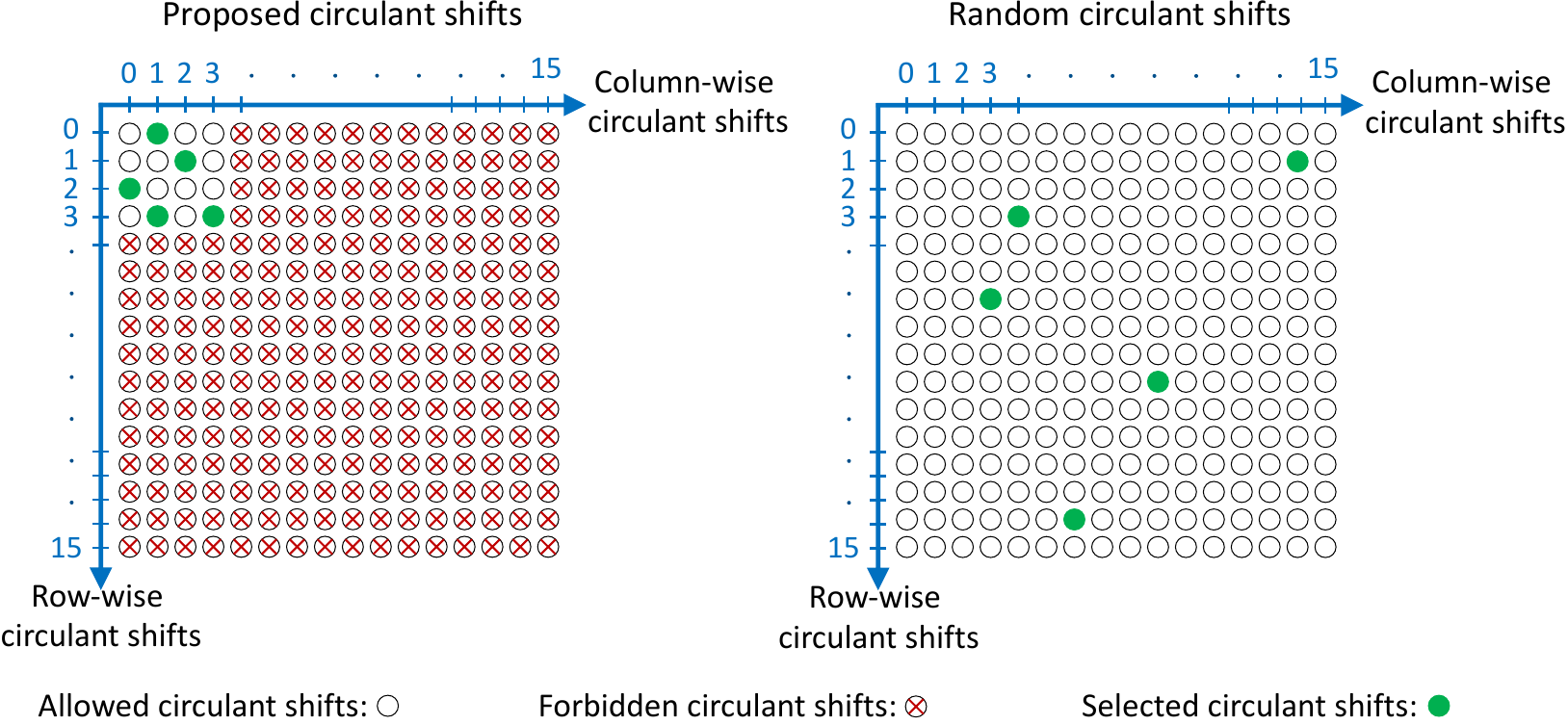}}
\caption{\small Our PCS scheme for $N=16$, $N_{\mathrm{e}}=4$, $N_{\mathrm{a}}=4$, $S=N_{\mathrm{e}}N_{\mathrm{a}}$, $\rho_{\mathrm{e}}=N/N_{\mathrm{e}}$, $\rho_{\mathrm{a}}=N/N_{\mathrm{a}}$, and $M=5$. Here, the circular shifts in PCS and RCS are chosen at random from $\{0,1,2,3\}\times\{0,1,2,3\}$ and $[16]\times[16]$ respectively.\normalsize}\label{fig_7:sampling} 
\end{figure}
\subsection{Guarantees on in-sector CS with the OMP}\label{sec6_mse}
In this section, we discuss our guarantees for OMP-based sparse channel estimation within the sector of interest $\Ao$. Our guarantees are an extension of the coherence-based guarantees in \cite{ben2010coherence} and they help us study the impact of variations in the illumination pattern within the sector. Here, we only provide the main results and we refer the interested reader to our work in \cite{masoumi2023analysis} for more details. 
 
\par Prior work on coherence-based guarantees for the OMP makes a strong assumption that the CS matrix has equal column norms. In our problem, the norms of the columns within the CS matrix are controlled by $\mathbf{z}_{\mathrm{o}}$ as observed from \eqref{eqn:cs_mat}. To express our guarantees as a function of the variations in  $\mathbf{z}_{\mathrm{o}}$, we define $\mathbf{z}_{\mathcal{L}_\mathrm{o}}$ as a $\rho_\mathrm{e}\rho_{\mathrm{a}}\times 1$ subvector of the vectorized spectral mask $\mathbf{z}_\mathrm{o}$ obtained by retaining indices in $\mathcal{L}_\mathrm{o}$. We use $d_i$ to denote the $\ell_2$-norm of the $i^{\mathrm{th}}$ column of $\mathbf{A}_{\mathcal{L}_\mathrm{o}}$. Using the definition of $\mathbf{A}_\mathrm{o}$ in \eqref{eqn:cs_mat}, we can write that 
\begin{equation}\label{eqn:l2_norm1}
    d_i = \frac{\sqrt{M}}{{N}}|z_{\mathcal{L}_\mathrm{o}}(i)|,~~~~\forall i\in[\rho_{\mathrm{e}}\rho_{\mathrm{a}}],
\end{equation}
where $z_{\mathcal{L}_\mathrm{o}}(i)$ denotes the $i^{\mathrm{th}}$ entry of $\mathbf{z}_{\mathcal{L}_\mathrm{o}}$. The maximum and the minimum of the column norms in $\mathbf{A}_{\mathcal{L}_\mathrm{o}}$ are defined as $d_{\mathrm{max}} = \underset{j\in[\rho_{\mathrm{e}}\rho_{\mathrm{a}}]}{\max}~d_{j}$ and $d_{\mathrm{min}} = \underset{j\in[\rho_{\mathrm{e}}\rho_{\mathrm{a}}]}{\min}~d_{j}$.
As the $\ell_2$-norm of $\mathbf{z}_{\mathcal{L}_\mathrm{o}}$ is equal to the Frobenius norm of the AWM $\Po$, i.e., $1$, we have $\sum_{i=1}^{\rho_{\mathrm{e}}\rho_{\mathrm{a}}}d_i^{2}=M/N^2$. Due to this energy constraint, 
\begin{equation}\label{eqn:dmin_max}
    0<d_{\mathrm{min}} \leq d_{\mathrm{max}}<\sqrt{M}/N.
\end{equation}
The entries of the spectral mask in $\mathbf{z}_{\mathcal{L}_\mathrm{o}}$ are proportional to the norm of the columns of the CS matrix.
\par Now, we provide performance guarantees to recover the sparse in-sector beamspace $\mathbf{x}_{\mathcal{L}_\mathrm{o}}$ from the noisy CS measurements in \eqref{eqn:1D_CS_measure} using the OMP \cite{gharavi1998fast}. The guarantees provided in Theorem \ref{theorem1} are for successful support recovery and for the MSE in the reconstruction.
\begin{theorem}\label{theorem1}
Let $x_{\mathcal{L}_\mathrm{o},\mathrm{min}} = \underset{j\in\Pi}{\min}~|x_{\mathcal{L}_\mathrm{o},j}|$, where $\Pi$ denotes the support of $\mathbf{x}_{\mathcal{L}_\mathrm{o}}$ with cardinality $|\Pi|=k$. If
\begin{equation}\label{eqn:cond_min}
    d_{\mathrm{min}}x_{\mathcal{L}_\mathrm{o},\mathrm{min}} - (2k-1)\mu d_{\mathrm{max}}x_{\mathcal{L}_\mathrm{o},\mathrm{min}} \geq 2\gamma,
\end{equation}
for $\gamma>0$, with probability exceeding
\begin{equation}\label{eqn:lemma1}
        \mathrm{Pr}\!\left\lbrace E \right\rbrace \geq \left(1-\sqrt{\frac{2}{\pi}}.\sqrt{\frac{\sigma}{\gamma}}\exp{\!\left(\!-\frac{\gamma^2}{2\sigma^2}\right)}\right)^{2N^2/S}.
    \end{equation}
the OMP algorithm successfully recovers the support of $\mathbf{x}_{\mathcal{L}_\mathrm{o}}$ and the MSE of the estimate $\hat{\mathbf{x}}_{\mathcal{L}_\mathrm{o}}$ is upper bounded as
\begin{equation}\label{eqn:mse_est}
        \|\hat{\mathbf{x}}_{\mathcal{L}_\mathrm{o}}-\mathbf{x}_{\mathcal{L}_\mathrm{o}}\|_2^2 \leq \left(\frac{d_{\mathrm{max}}}{d_{\mathrm{min}}}\right)^{2}\frac{k\gamma^2}{\left(d_{\mathrm{min}} - (k-1)\mu d_{\mathrm{max}}\right)^2}.
    \end{equation}
\end{theorem}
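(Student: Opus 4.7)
The plan is to extend the coherence-based OMP analysis of Ben-Haim et al.~\cite{ben2010coherence} to the effective CS matrix $\mathbf{A}_{\mathcal{L}_\mathrm{o}}$, whose columns are \emph{not} normalized: by \eqref{eqn:l2_norm1} their $\ell_2$-norms $d_i=\sqrt{M}|z_{\mathcal{L}_\mathrm{o}}(i)|/N$ inherit the non-uniformity of the in-sector spectral mask, which is precisely the effect that Theorem~\ref{theorem1} exposes through the pair $(d_{\min},d_{\max})$. The argument naturally splits into three pieces: a high-probability control of the noise-induced correlations, an inductive proof of correct support recovery on that event, and a restricted least-squares bound on the post-support MSE.

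First I would introduce the event
\begin{equation*}
E = \bigl\{\,|\langle \mathbf{a}_{\mathcal{L}_\mathrm{o},i},\mathbf{v}\rangle|\le\gamma\ \text{for all}\ i\in[\rho_\mathrm{e}\rho_\mathrm{a}]\,\bigr\}.
\end{equation*}
Because $\mathbf{v}\sim\mathcal{CN}(\mathbf{0},\sigma^2\mathbf{I})$, each inner product is a zero-mean complex Gaussian whose variance is at most $\sigma^2 d_{\max}^2<M\sigma^2/N^2$ by \eqref{eqn:dmin_max}, so its real and imaginary parts are real Gaussians. A standard Gaussian tail bound applied to each component, followed by a union bound over the $\rho_\mathrm{e}\rho_\mathrm{a}=N^2/S$ atoms and the two components, yields \eqref{eqn:lemma1}, which is why the exponent is $2N^2/S$.

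Next, on the event $E$ I would prove support recovery by induction on the OMP iteration index. Assuming that after $t<k$ steps only indices of $\Pi$ have been selected, I would write the residual as $\mathbf{r}_t=\mathbf{A}_{\mathcal{L}_\mathrm{o},\Pi}\boldsymbol{\delta}^{(t)}+(\mathbf{I}-\mathbf{P}_t)\mathbf{v}$, with $\boldsymbol{\delta}^{(t)}$ supported on $\Pi$ and $\mathbf{P}_t$ the projector onto the already-selected atoms. Expanding $|\langle\mathbf{a}_{\mathcal{L}_\mathrm{o},j},\mathbf{r}_t\rangle|$ and using the coherence bound $|\langle\mathbf{a}_i,\mathbf{a}_j\rangle|\le\mu d_i d_j$ implied by \eqref{eqn:eff_coherence_2} together with $d_i\in[d_{\min},d_{\max}]$, the correlation with any as-yet-unselected true-support atom $j\in\Pi$ is bounded below by $d_{\min}|x_{\mathcal{L}_\mathrm{o},j}|-(k-1)\mu d_{\max}|x_{\mathcal{L}_\mathrm{o},\max}|-\gamma$, while the correlation with any $j\notin\Pi$ is bounded above by $k\mu d_{\max}|x_{\mathcal{L}_\mathrm{o},\max}|+\gamma$. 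Requiring the former to dominate in the worst case $|x_{\mathcal{L}_\mathrm{o},j}|=|x_{\mathcal{L}_\mathrm{o},\max}|=x_{\mathcal{L}_\mathrm{o},\min}$ collapses to \eqref{eqn:cond_min}, and after $k$ iterations the support $\Pi$ is fully recovered.

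Once the support is identified, the OMP estimate on $\Pi$ is the least-squares solution, so $\hat{\mathbf{x}}_{\mathcal{L}_\mathrm{o}}-\mathbf{x}_{\mathcal{L}_\mathrm{o}}=(\mathbf{A}_\Pi^\ast\mathbf{A}_\Pi)^{-1}\mathbf{A}_\Pi^\ast\mathbf{v}$ on $\Pi$ and vanishes off $\Pi$. Applying Gershgorin to $\mathbf{A}_\Pi^\ast\mathbf{A}_\Pi$, whose diagonal entries are $d_j^2$ and whose off-diagonal magnitudes are bounded by $\mu d_i d_j\le\mu d_{\max}^2$, gives $\lambda_{\min}(\mathbf{A}_\Pi^\ast\mathbf{A}_\Pi)\ge d_{\min}(d_{\min}-(k-1)\mu d_{\max})$; combined with $\|\mathbf{A}_\Pi^\ast\mathbf{v}\|_2^2\le k\gamma^2$ on $E$, this yields \eqref{eqn:mse_est}, with the prefactor $(d_{\max}/d_{\min})^2$ accounting for the conversion between the column-norm scaling of $\mathbf{A}_\Pi$ and the unit-norm scale on which Gershgorin naturally applies. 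The main obstacle will be the bookkeeping inside the inductive step: because $d_j$ varies over $j$, one must judiciously choose when to lower-bound a column norm by $d_{\min}$ and when to upper-bound by $d_{\max}$ so that the per-iteration threshold collapses to the clean symmetric form of \eqref{eqn:cond_min}. The Gaussian tail bound and the restricted least-squares analysis otherwise follow the template of \cite{ben2010coherence} after this rescaling.
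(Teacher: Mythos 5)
Your overall architecture---extending the coherence-based OMP analysis of \cite{ben2010coherence} to the unnormalized effective matrix $\mathbf{A}_{\mathcal{L}_\mathrm{o}}$ via a high-probability event on the noise correlations, an induction for support recovery, and a restricted least-squares step for the MSE---is exactly the strategy the paper announces; note that the paper itself gives no proof here and defers entirely to the companion work \cite{masoumi2023analysis}. Two of your steps, however, would not deliver the bounds as stated. First, a union bound over the $2N^2/S$ real Gaussian components yields $\mathrm{Pr}\{E\}\geq 1-(2N^2/S)\,p$, with $p$ the single-component tail probability, whereas \eqref{eqn:lemma1} asserts $\mathrm{Pr}\{E\}\geq(1-p)^{2N^2/S}$. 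Since $(1-p)^{n}\geq 1-np$ by Bernoulli's inequality, the product form is the \emph{stronger} statement and does not follow from a union bound; one needs \v{S}id\'ak's inequality for the jointly Gaussian collection $\{\mathrm{Re}\langle\mathbf{a}_i,\mathbf{v}\rangle,\mathrm{Im}\langle\mathbf{a}_i,\mathbf{v}\rangle\}_{i}$, which is how \cite{ben2010coherence} obtains precisely this form.

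Second, applying Gershgorin directly to $\mathbf{A}_\Pi^{\ast}\mathbf{A}_\Pi$, whose diagonal entries are $d_j^2$ and whose off-diagonal magnitudes are bounded by $\mu d_{\max}^2$, gives $\lambda_{\min}\geq d_{\min}^2-(k-1)\mu d_{\max}^2$. This can be negative even when \eqref{eqn:cond_min} holds (e.g., $d_{\max}>2d_{\min}$ with $(k-1)\mu d_{\max}^{2}$ slightly exceeding $d_{\min}^{2}$ while $d_{\min}>(2k-1)\mu d_{\max}$), so your route produces a vacuous bound in exactly the unequal-column-norm regime the theorem targets. The repair is to factor $\mathbf{A}_\Pi=\tilde{\mathbf{A}}_\Pi\mathbf{D}$ with $\tilde{\mathbf{A}}_\Pi$ column-normalized and $\mathbf{D}=\mathrm{Diag}(d_j)$, apply Gershgorin to the unit-diagonal Gram matrix of $\tilde{\mathbf{A}}_\Pi$ (off-diagonals bounded by $\mu d_{\max}/d_{\min}$ after normalization), and only then reinsert the scaling; the $d_j$ factors then assemble into the $(d_{\max}/d_{\min})^2$ prefactor and the denominator $(d_{\min}-(k-1)\mu d_{\max})^2$ of \eqref{eqn:mse_est}. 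Finally, in the inductive step the comparison must be anchored to the atom carrying the \emph{largest} remaining coefficient, so that both the correct-atom lower bound and the incorrect-atom upper bound scale with the same quantity $|x^{(t)}_{\max}|\geq x_{\mathcal{L}_\mathrm{o},\mathrm{min}}$; substituting $x_{\mathcal{L}_\mathrm{o},\mathrm{min}}$ for $|x_{\max}|$ inside a negatively signed term, as your worst-case argument suggests, would weaken rather than establish the sufficient condition \eqref{eqn:cond_min}.
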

\begin{proof}
The proof is discussed in \cite{masoumi2023analysis}.
\end{proof}
\par An important insight from Theorem \ref{theorem1} is that the OMP can identify weak coefficients in the beamspace when $d_{\mathrm{max}}/d_{\mathrm{min}}$ of the CS matrix is small. This observation follows by rewriting \eqref{eqn:cond_min} as 
\begin{align}
    x_{\mathcal{L}_\mathrm{o},\mathrm{min}} &{\geq} \frac{2\gamma}{d_{\mathrm{min}} - (2k-1)\mu d_{\mathrm{max}}}\\
    &= \frac{2\gamma/d_{\mathrm{min}}}{1 - \mu(2k-1) (d_{\mathrm{max}}/d_{\mathrm{min}})}.
\end{align}
The upper bound on the MSE in \eqref{eqn:mse_est} monotonically increases with $d_{\mathrm{max}}/d_{\mathrm{min}}$. In summary, our guarantees on support recovery and the MSE become tight for a small  $d_{\mathrm{max}}/d_{\mathrm{min}}$. We notice from \eqref{eqn:dmin_max} that the smallest possible $d_{\mathrm{max}}/d_{\mathrm{min}}$ is $1$. This is achieved when $\underset{i\in[\rho_{\mathrm{e}}\rho_{\mathrm{a}}]}{\min}~ |z_{\mathcal{L}_\mathrm{o}}(i)|=\sqrt{S}/N$, equivalently when $\Po$ uniformly illuminates the sector of interest. These guarantees motivate our optimization objective in \eqref{eqn:opt_weight} which aims to generate AWMs that result in an almost uniform illumination within the sector. Further, it can be concluded from our guarantees that the illumination pattern in Fig.~\ref{fig_6:SpectralMaskWeights}\textcolor{red}{(a)} is expected to achieve better in-sector channel reconstruction than the pattern in Fig.~\ref{fig_6:SpectralMaskWeights}\textcolor{red}{(b)}.  
\section{Numerical Results}\label{sec5sims}
In this section, we first describe how the proposed beamforming method is extended to a wideband system using the IEEE 802.11ad frame structure. Then, we discuss the performance with our proposed method in terms of the normalized MSE in the in-sector channel estimate and the achievable rate.
\begin{figure}[t]
\centering
\includegraphics[trim=0.2cm 5.9cm 0.4cm 5.92cm, clip, scale=0.49]{./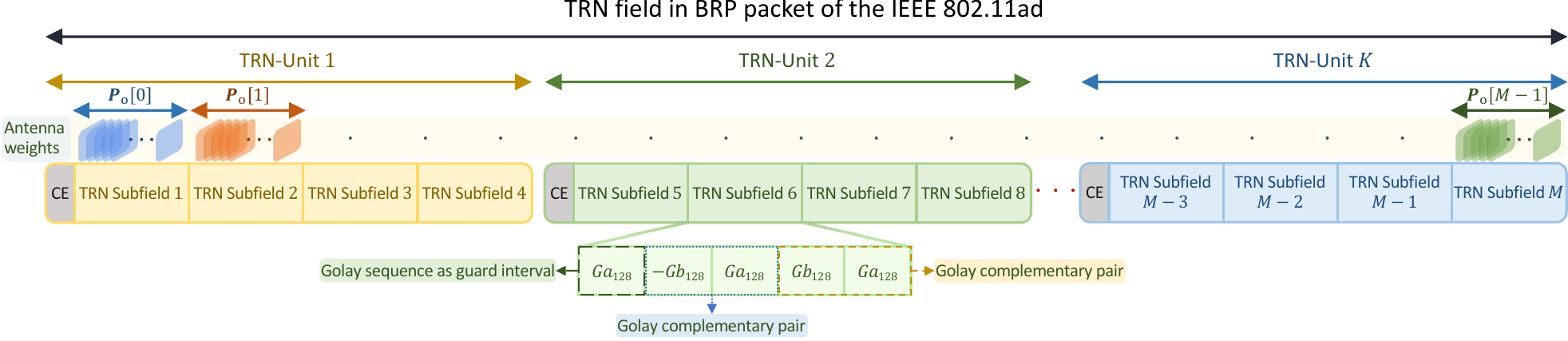}
\caption{\small The TRN field in a BRP packet of the IEEE 802.11ad standard consists of $K$ TRN-Units, where $K$ can be as large as $32$ \cite[Table 20-13]{standard}. Each TRN-Unit contains $4$ TRN subfields comprising Golay complementary pairs \cite[Sec. 20.9.2.2.6]{standard}, \cite{golay1961complementary}. The in-sector CS measurements can be acquired within the sector of interest $\Ao$, when the TX sequentially applies $\{\Po[m]\}_{m=0}^{M-1}$ at the beginning of each TRN subfield. \normalsize}\label{fig:brp} 
\end{figure}
\subsection{Extension to a wideband system}
\par We denote the $L-$tap wideband channel as $\{\mathbf{H}[\ell]\}_{\ell=0}^{L-1}$ where $\mathbf{H}[\ell]\in\mathbb{C}^{N\times N}$. The TX first identifies the best sector by applying $S$ different AWMs $\{\mathbf{P}_s\}_{s=0}^{S-1}$ in SLS. The best sector is the one for which the received power, i.e., $\sum_{\ell=0}^{L-1}|\left\langle \mathbf{H}[\ell], \mathbf{P}_s \right\rangle + v_{\ell}|^2$ is the highest and the corresponding base AWM is $\Po$. Then, the TX acquires CS measurements within this sector by applying different AWMs $\{\Po[m]\}_{m=0}^{M-1}$ at the beginning of each TRN subfield in a BRP packet. As shown in Fig.~\ref{fig:brp}, to obtain the $m^{\mathrm{th}}$ measurement, at the beginning of the $m^{\mathrm{th}}$ TRN subfield, TX sets the antenna weights at the UPA to $\Po[m]$. With our design, $\Po[m]$ is a 2D-circular shift of $\Po$. 
\par In IEEE 802.11ad, a Golay sequence \cite{golay1961complementary} $Ga_{128}$ of length $128$ is used as a guard interval at the beginning of the TRN subfield. This interval provides enough time to change the antenna weights at the TX and also avoids interference across successive subfields. Then, the TX transmits two Golay complementary pairs $\{-Gb_{128},Ga_{128}\}$ and $\{Gb_{128},Ga_{128}\}$ for each AWM $\{\Po[m]\}_{m=0}^{M-1}$ applied at the UPA. The complementary property of Golay sequences can be used to obtain measurements of the equivalent wideband channel seen through the beamformer $\Po[m]$ \cite{mishra2017sub}. This channel is $\{\langle\mathbf{H}[\ell],\Po[m]\rangle\}_{\ell=0}^{L-1}$, a vector of length $L$ for each $m\in[M]$. We define $\mathbf{Y}_{\mathrm{blk}}\in\mathbb{C}^{M\times L}$ to denote the set of these $ML$ spatial channel projections within the sector of interest and the measurement noise as $\mathbf{V}_{\mathrm{blk}}\in\mathbb{C}^{M\times L}$. Hence, the $(m, \ell)^{\mathrm{th}}$ entry of $\mathbf{Y}_{\mathrm{blk}}$ is
\begin{equation}\label{eqn:wide}
   Y_{\mathrm{blk}}(m,\ell) = \langle\mathbf{H}[\ell],\Po[m]\rangle + V_{\mathrm{blk}}(m, \ell).
\end{equation}
Due to the spreading gain of $N_\mathrm{seq}=256$ with the Golay sequence correlators, the entries of $\mathbf{V}_{\mathrm{blk}}$ are independently and independently distributed as $\mathcal{CN}(0,\sigma^2/N_\mathrm{seq})$.
\par In this paper, only the DC subcarrier, i.e., the sum of the $L$ taps of the wideband channel $\mathbf{H}_{\mathrm{sum}}=\sum_{\ell=0}^{L-1}\mathbf{H}[\ell]$ is estimated within the sector of interest. Although our method can be applied to estimate all the taps, we only estimate the sum of the channel taps as it requires less computational complexity than wideband channel estimation. This approach is reasonable because the phased array can only apply a frequency-flat beamformer, which may be designed based on a single subcarrier. The in-sector CS measurements are thus $y[m]=\sum_{\ell=0}^{L-1}Y_{\mathrm{blk}}(m,\ell),~\forall m\in[M]$. Now, similar to the narrowband case, in-sector CS can be performed using $\{y[m]\}_{m=0}^{M-1}$ to obtain an estimate of the DC subcarrier $\hat{\mathbf{H}}_{\mathrm{sum}}^{\mathrm{o}}$ within the sector of interest. This estimate can then be used to design the transmit beamformer. 
\subsection{System parameters}
We consider a $32\times 32$ phased array in Fig.~\ref{fig_1:sysmdl}\textcolor{red}{(a)} at the TX and we assume $1$-bit phase shifters unless otherwise stated. The TX-RX distance is 60 $\mathrm{m}$. The heights of the TX and the RX are $3~\mathrm{m}$ and $1.5~\mathrm{m}$. The system operates at a carrier frequency of $60~\mathrm{GHz}$ with a bandwidth of $100~\mathrm{MHz}$ that corresponds to a symbol duration of $10~\mathrm{ns}$. We use $\mathrm{SNR}_{\mathrm{omni}}$ to denote the SNR without spreading gain and without any beamforming gain. This is obtained when the TX applies a quasi-omnidirectional beamformer generated by a perfect binary array \cite{myers2019falp}. We use mmWave channels obtained from the NYU channel simulator \cite{sun2017novel} for a line-of-sight scenario in an urban micro environment. For $100$ independent channel realizations, the omnidirectional RMS delay spread was less than $25~\mathrm{ns}$ in more than $90\%$ of the channel realizations. We model the wideband channel using $L=10$ taps corresponding to a duration of $100~\mathrm{ns}$ which is much larger than $25~\mathrm{ns}$. The simulation results in this section are averaged over the $100$ channel realizations. 
\subsection{Performance of the proposed in-sector CS-based beamforming}
\par We discuss metrics used to evaluate the performance of the proposed method against two other benchmarks. We also assess the performance gain arising from each of our three contributions: (i) Construction of non-overlapping comb-like sectors in SLS, (ii) Optimization of the weight matrix $\mathbf{W}^s$ to achieve an almost uniform illumination pattern, and (iii) Optimization of the circular shifts in PCS for in-sector CS.

\par To construct the base AWMs in our approach for SLS, we initialize the weight matrix within the iterative optimization algorithm in \eqref{eqn:opt_weight} to different matrices that exhibit good autocorrelation properties. This is done because the algorithm can converge to different local minima \cite{zhao2016unified} depending on the initialization. To this end, we use perfect binary arrays \cite{jedwab1994perfect}, the quantized DFT matrix, the outer product of two Frank sequences \cite{frank1962phase}, the outer product of two Zadoff-Chu (ZC) sequences \cite{chu1972polyphase}, and the outer product of two Golomb sequences \cite{zhang1993polyphase} for the initialization. Then, the initialization that results in the most uniform magnitude profile within our comb-like construction is used to construct a base AWM for the sector. The sectors generated with our approach have a comb-like structure, unlike the contiguous sectors discussed in prior work.
\par For CS-based channel estimation within the sector of interest, we used the OMP algorithm \cite{gharavi1998fast} to obtain an estimate of $\mathbf{H}_{\mathrm{sum}}$. We study the performance of 2D-CCS-based with both RCS and the optimized PCS. We note that in RCS, it is possible to have the same AWM $\Po[m]$ used to obtain measurements. We remove such measurements to avoid rank deficiency in OMP sparse recovery. Let $\mathbf{H}_{\mathrm{sum}}^{\mathrm{o}}$ denote the spatially filtered version of $\mathbf{H}_{\mathrm{sum}}$ within the sector of interest. The matrix $\mathbf{H}_{\mathrm{sum}}^{\mathrm{o}}$ is obtained by applying a binary mask, that is one at the indices in $\Ao$, over the beamspace representation of $\mathbf{H}_{\mathrm{sum}}$. We then define the normalized in-sector MSE $\mathbb{E}[\|\mathbf{H}_{\mathrm{sum}}^{\mathrm{o}}-\hat{\mathbf{H}}_{\mathrm{sum}}^{\mathrm{o}}\|^2_{\mathrm{F}}]/\mathbb{E}[\|\mathbf{H}_{\mathrm{sum}}^{\mathrm{o}}\|^2_{\mathrm{F}}]$
as a metric for the reconstruction error in $\hat{\mathbf{H}}_{\mathrm{sum}}^{\mathrm{o}}$. 
\par After the in-sector channel is estimated, the TX applies a beamformer $\mathbf{F}\in\mathbb{Q}_{q}^{N\times N}$ such that $\lvert\langle\mathbf{F},\hat{\mathbf{H}}_{\mathrm{sum}}^{\mathrm{o}}\rangle\rvert$ is maximized. Let $\mathrm{phase(\kappa)}$ denote the phase of the complex scalar $\kappa$. When there is no constraint on the resolution of the phase shifters, i.e., $q=\infty$, from the dual norm inequality \cite{boyd2004convex}, $F^{\mathrm{opt}}_{ij}=\exp\left(\mathsf{j}{\mathrm{phase}([\hat{H}_{\mathrm{sum}}^{\mathrm{o}}]_{ij})}\right)/N$ is the maximizer of $\lvert\langle\mathbf{F}^{\mathrm{opt}},\hat{\mathbf{H}}_{\mathrm{sum}}^{\mathrm{o}}\rangle\rvert$. Under $q-$bit phase quantization, the TX applies $\mathbf{F}=\mathcal{Q}_q(\mathbf{F}^{\mathrm{opt}})$ as the transmit beamformer. The effective wideband single-input single-output channel is then $\{\langle\mathbf{H}[\ell],\mathbf{F}\rangle\}_{\ell=0}^{L-1}$. Finally, we use the \textit{waterfilling} algorithm to obtain the achievable rate associated with the equivalent channel\cite{tse2005fundamentals}.
\par For the benchmarks, we use the methods from \cite{tsai2018structured,ali2017millimeter} that use contiguous sectors similar to Fig.~\ref{fig_2:sectorTypes}\textcolor{red}{(a)}. These sectors are defined by $\mathcal{B}_s = \{(p,q): k_{\mathrm{e}}\rho_{\mathrm{e}}\leq p < (k_{\mathrm{e}}+1)\rho_{\mathrm{e}},~ k_{\mathrm{a}}\rho_{\mathrm{a}}\leq q < (k_{\mathrm{a}}+1)\rho_{\mathrm{a}}\}$, that each represent a $\rho_{\mathrm{e}} \times \rho_{\mathrm{a}}$ block in the beamspace. The AWMs from \cite{tsai2018structured} are based on ZC sequences. Specifically, the $M$ in-sector CS measurements in \cite{tsai2018structured} are obtained by modulating a base AWM with $M$ columns of the 2D-DFT dictionary, that are chosen at random. Implementing such an AWM requires $\mathrm{log}_2 N$ resolution phase shifters, which may not be available in practice. To this end, we quantize the phase of the codewords to $q-$bits. For the greedy method in \cite{ali2017millimeter}, we generate one million AWMs from $\mathbb{Q}_q^{N \times N}$ at random. Then, for each sector in $\{\mathcal{B}_s\}_{s=0}^{S-1}$, we select the top AWM that radiates the largest energy within the sector to be used during the SLS. After determining the sector of interest in SLS, the top $M$ AWMs that radiate the largest energy within the sector of interest are selected to obtain in-sector CS measurements with \cite{ali2017millimeter}.
\begin{figure}[t]
\centering
\subfloat[]{
\includegraphics[trim=1.5cm 6.5cm 2.5cm 7.45cm, clip, scale=0.42]{./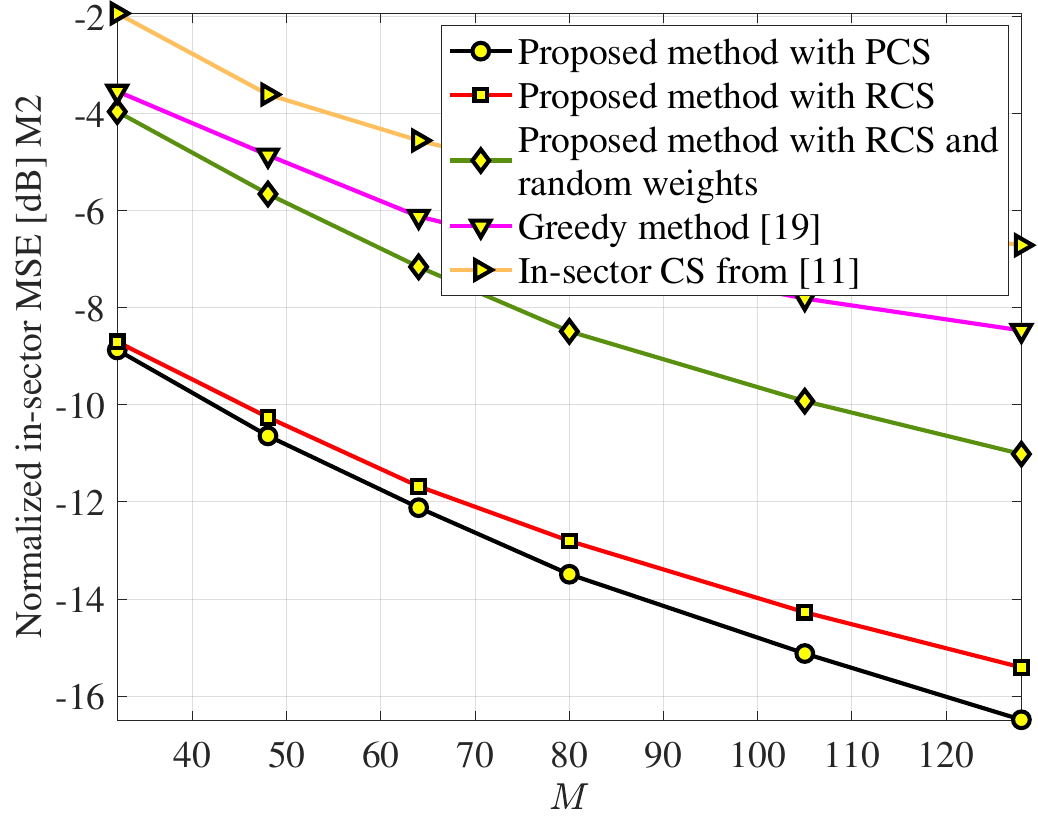}}
\hfil
\subfloat[]{
\includegraphics[trim=1.5cm 6.5cm 2.5cm 7.45cm, clip, scale=0.42]{./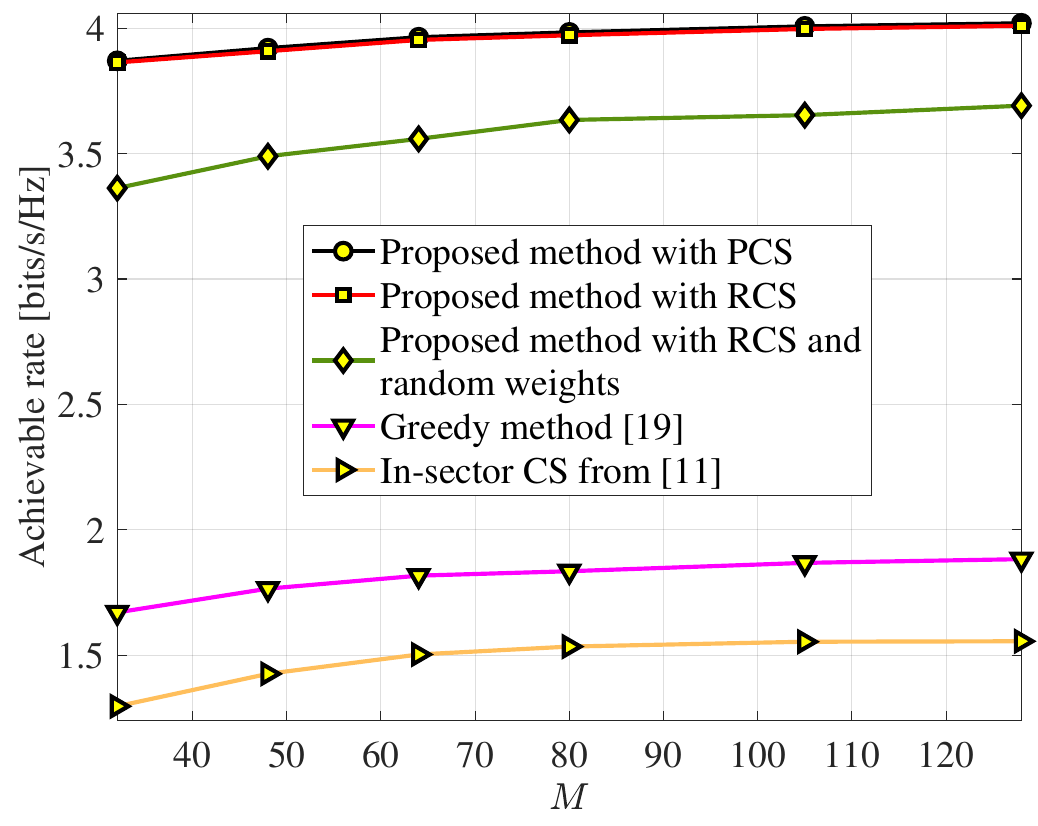}
}
\hfil
\subfloat[]{
\includegraphics[trim=1.5cm 6.5cm 2.5cm 7.45cm, clip, scale=0.42]{./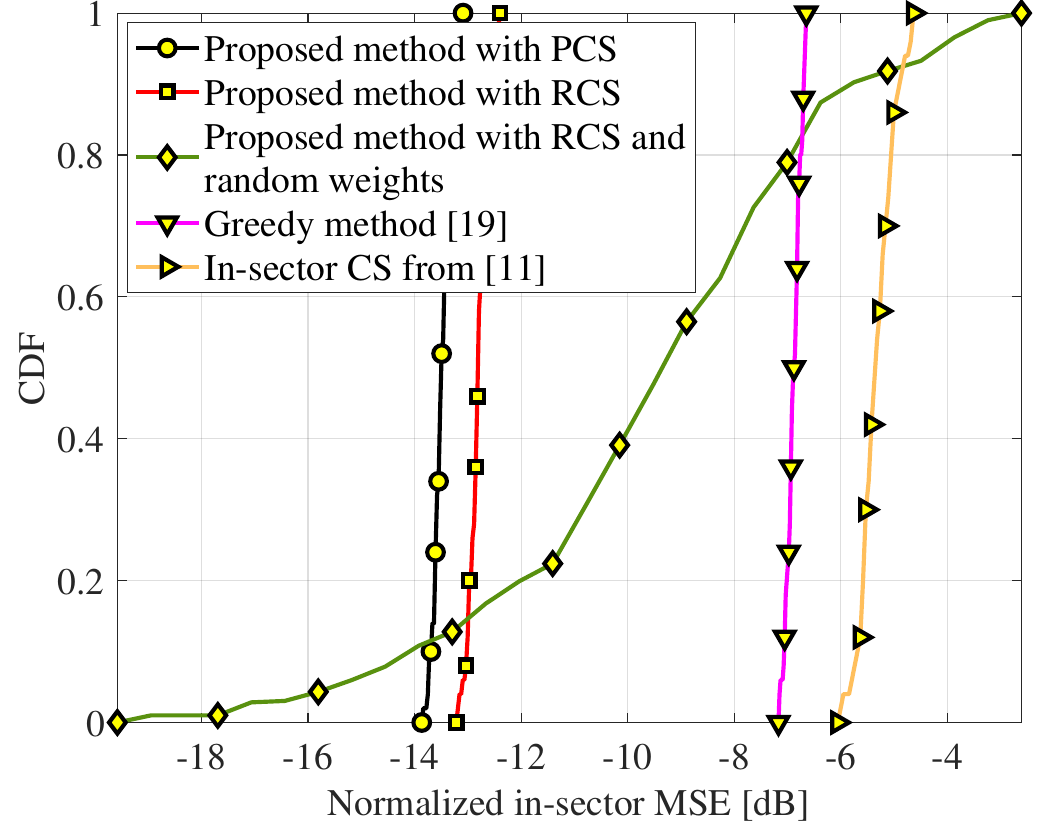}
}
\hfil
\subfloat[]{
\includegraphics[trim=1.5cm 6.5cm 2.5cm 7.45cm, clip, scale=0.42]{./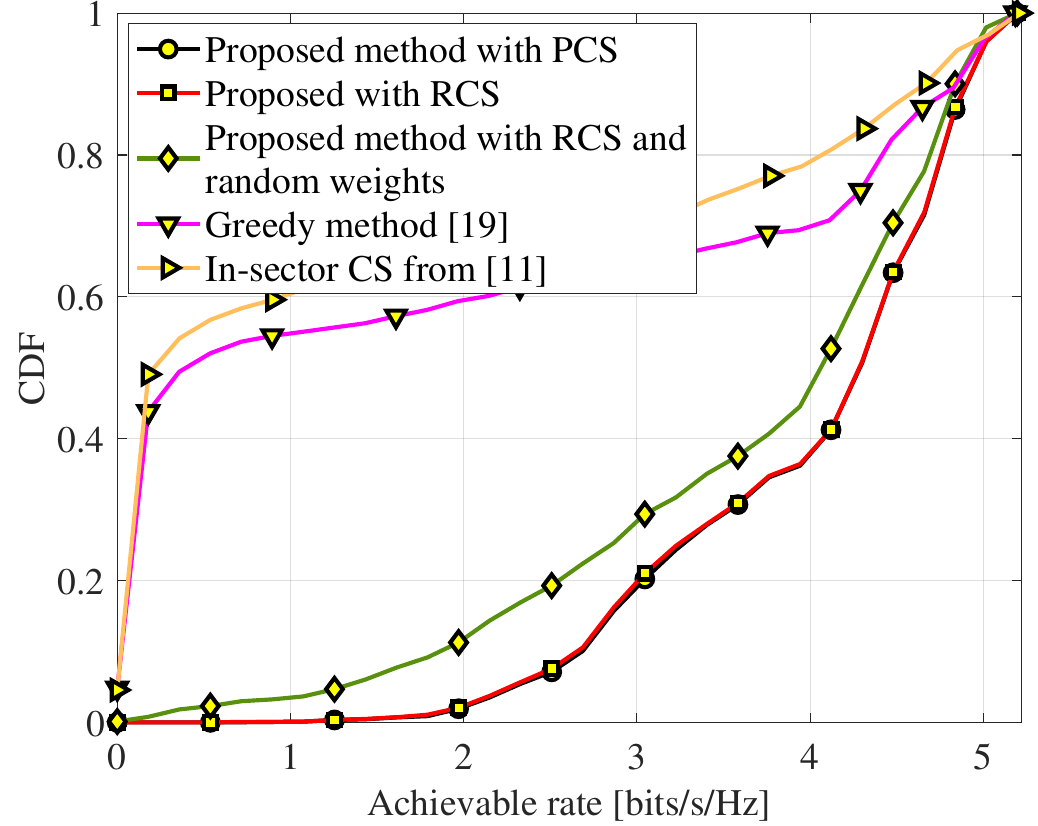}
}
\caption{\small  The plots show the normalized in-sector MSE,  achievable rate, and the CDF of the achievable rate using the proposed in-sector CCS-based method as well as the benchmarks. We assume $N_\mathrm{e}=2$, $N_\mathrm{e}=2$ which results $S=4$ sectors. We use $\mathrm{SNR}_{\mathrm{omni}}=-10~\mathrm{dB}$. In (c) and (d), we use $M=80$. It can be noticed that the proposed in-sector CCS-based method  outperforms the benchmarks when the weight matrix $\mathbf{W}^s$ in \eqref{eqn:AWM_from_upsampled} is optimized.\normalsize}\label{fig_9:vsM} 
\end{figure}
\par In Fig.~\ref{fig_9:vsM}\textcolor{red}{(a)}, we compare the normalized in-sector MSE of the estimated channel for the proposed in-sector CCS-based method and the benchmarks. As we observe, there is a large gap between the proposed method with PCS or RCS and the benchmarks \cite{tsai2018structured,ali2017millimeter} for a wide range of CS measurements. This is because our developed AWMs result in higher received power within the sector of interest, boosting the received SNR and therefore resulting in a low normalized in-sector MSE. The high received power within the sector of interest in our method is due to the concentration of the transmitter's energy only within the sector while the AWMs in \cite{tsai2018structured,ali2017millimeter} lead to a substantial amount of power leakage outside the sector of interest. Also, we observe that our method with PCS results in a lower normalized in-sector MSE than with RCS within the sector of interest. This is because PCS leads to lower aliasing artifacts within the sector of interest than the RCS as we observed in Fig.~\ref{fig_6:PSFs}. We can also observe a similar performance difference between our proposed method using PCS or RCS and the benchmarks in the achievable rate from Fig.~\ref{fig_9:vsM}\textcolor{red}{(b)} for different numbers of the measurement $M$. 
\par From the CDF plot in Fig.~\ref{fig_9:vsM}\textcolor{red}{(d)}, we observe that with a probability exceeding $0.98$, our proposed method with the optimized weight matrix $\mathbf{W}^{s}$ results in an achievable rate of at least about $2$ bits/s/Hz. The use of random weights for $\mathbf{W}^{s}$ results in a lower rate, which motivates the need to use almost uniform comb-like beams designed in this work. For the other in-sector CS benchmarks based on \cite{tsai2018structured,ali2017millimeter}, we observe that the achievable rate is lower than that with our approach. This is because these methods radiate a significant amount of the transmitter's energy outside their intended sectors. As a result, our solutions achieve lower in-sector NMSE and a higher rate than the benchmarks in \cite{tsai2018structured,ali2017millimeter}.
\begin{figure}[t]
\centering
\subfloat[]{
\includegraphics[trim=1.5cm 6.6cm 2.5cm 7.45cm, clip, scale=0.45]{./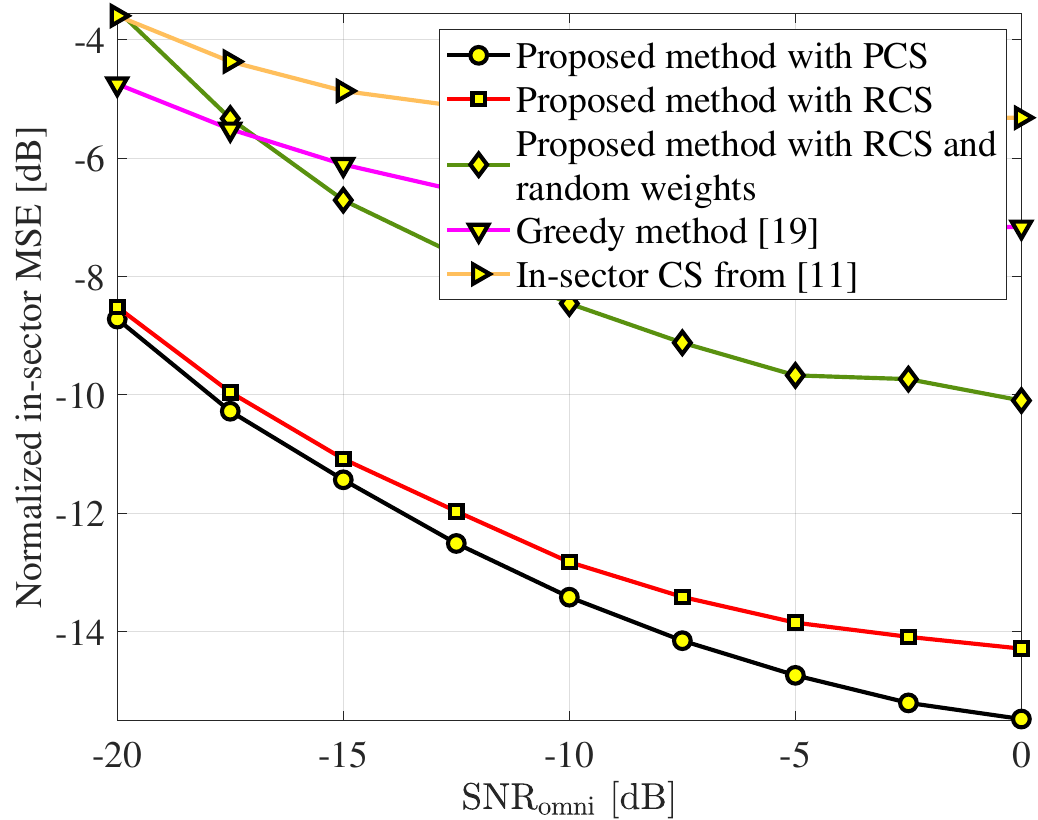}}
\hfil
\subfloat[]{
\includegraphics[trim=1.5cm 6.6cm 2.5cm 7.45cm, clip, scale=0.45]{./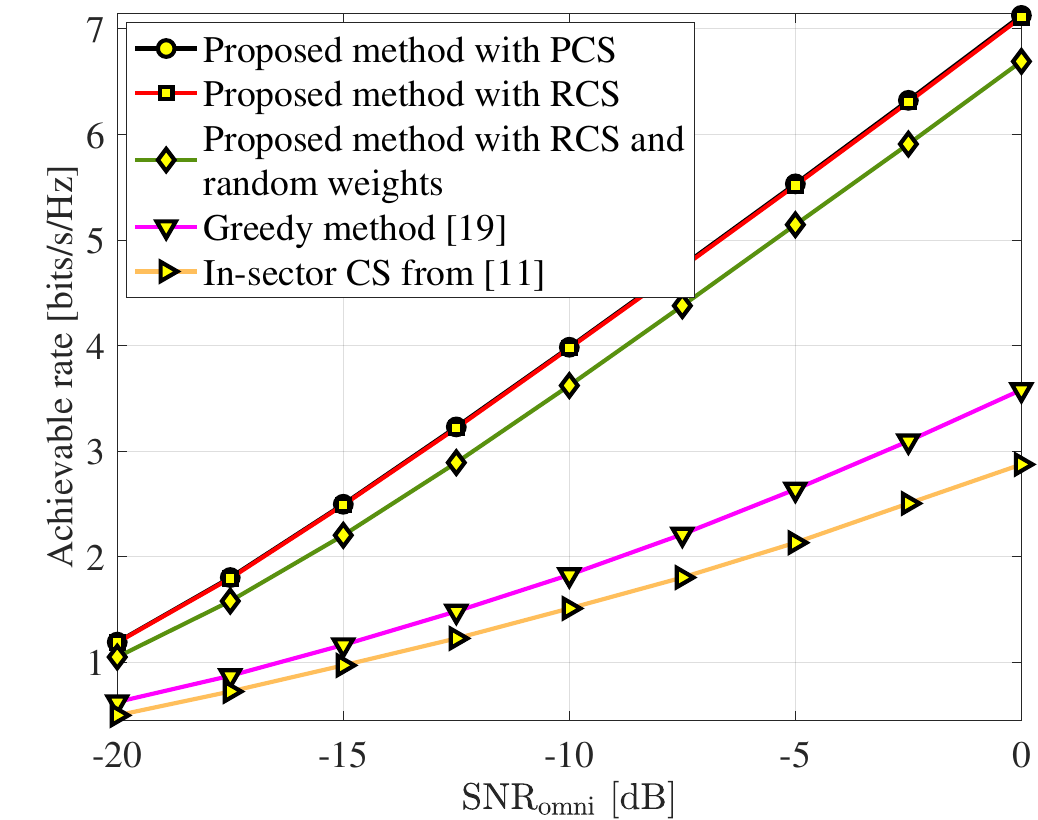}
}
\caption{\small  Here, we consider $N_\mathrm{e}=2$, $N_\mathrm{e}=2$ that correspond to $S=4$ sectors, and use $M=80$ in-sector CS measurements. For in-sector CS with our optimized comb-like sectors, we observe that PCS achieves a lower NMSE than RCS. This is because the designed set of circular shifts results in lower aliasing artifacts than random shifts. Furthermore, our method outperforms the benchmarks in \cite{tsai2018structured,ali2017millimeter} in terms of the NMSE and the rate.
\normalsize}\label{fig_9:vsSNR} 
\end{figure}
\par We observe from Fig.~\ref{fig_9:vsM}\textcolor{red}{(a)} that the proposed method with RCS and a random $\mathbf{W}^s$ results in a higher normalized in-sector MSE than the case with the optimized $\mathbf{W}^s$. This is because a random $\mathbf{W}^s$ is very likely to result in a  non-uniform illumination pattern within the comb-like structure as shown in Fig.~\ref{fig_6:SpectralMaskWeights}\textcolor{red}{(b)}. The column norms of such a CS matrix are substantially different with such an illumination, which is expected to result in poor reconstruction according to our analysis in Section \ref{sec6_mse}.  
\par In Fig.~\ref{fig_9:vsSNR}, we compare the proposed in-sector CS method against the benchmarks for different values of $\mathrm{SNR}_{\mathrm{omni}}$. We observe that our method outperforms the benchmarks over a wide range of $\mathrm{SNR}_{\mathrm{omni}}$. Furthermore, it is evident that at high $\mathrm{SNR}_\mathrm{omni}$, PCS outperforms RCS due to a significantly lower in-sector NMSE in the channel estimate. While at low SNR levels, additive white Gaussian noise remains the dominant factor, at high SNR levels, the aliasing artifacts arising from subsampling dominate. As the aliasing artifacts with PCS are substantially lower than RCS due to the optimized circular shifts, the disparity in performance between PCS and RCS is discernible at high SNR.
\section{Conclusions}\label{sec6_conclusion}
In this paper, we proposed a new comb structured codebook for sector level sweep with phased arrays. We also developed a convolutional CS-based method for channel estimation or beam alignment within the sector of interest. Our approach requires a lower training overhead than the classical exhaustive beam scanning methods and integrates well into the beam refinement protocol of the IEEE 802.11ad/ay standards. Furthermore, it overcomes the poor received SNR issue with standard CS by concentrating the transmitter's energy in the sector of interest. We also optimized the CS matrix by designing the circular shifts in convolutional CS to reduce the in-sector channel reconstruction error. Finally, we provided conditions that guarantee the successful recovery of the support of beamspace within the sector of interest. Our results indicate that the reconstruction error in the estimated in-sector channel is low when the antenna weights applied at the transmitter illuminate the sector almost uniformly. 
\section{Appendix}
\subsection{Proof of Lemma \ref{lemma1}}\label{app:1a}
Let $\mathbf{1}_{\rho_{\mathrm{e}}\times\rho_{\mathrm{a}}}$ denote a $\rho_{\mathrm{e}}\times\rho_{\mathrm{a}}$ all-ones matrix. We use $\Gamma(\omega_1,\omega_2)$ to denote the 2D discrete-space Fourier transform of $\mathbf{1}_{\rho_{\mathrm{e}}\times\rho_{\mathrm{a}}}$ where $\{\omega_1,~\omega_2\}\in[0,2\pi]$ \cite{jain1989fundamentals} are continuous. The inverse 2D-DFT of $\mathbf{1}_{\rho_{\mathrm{e}}\times\rho_{\mathrm{a}}}$ is $\mathbf{U}_{\rho_{\mathrm{e}}}^{*}\mathbf{1}_{\rho_{\mathrm{e}}\times\rho_{\mathrm{a}}}\mathbf{U}_{\rho_{\mathrm{a}}}^{*}=\sqrt{\rho_{\mathrm{e}}\rho_{\mathrm{a}}}\mathbf{E}_{\rho_{\mathrm{e}}\times\rho_{\mathrm{a}}}$ where the ${\rho_{\mathrm{e}}\times\rho_{\mathrm{a}}}$ matrix $\mathbf{E}_{\rho_{\mathrm{e}}\times\rho_{\mathrm{a}}}$ is zero at all entries except $E_{\rho_{\mathrm{e}}\times\rho_{\mathrm{a}}}(0,0)=1$. Hence, we have \cite[Sec. 5.5]{jain1989fundamentals}
\begin{equation}\label{eqn:sampledFT_1}
    \frac{1}{\sqrt{\rho_{\mathrm{e}}\rho_{\mathrm{a}}}}\Gamma^{*}(\frac{2\pi i}{\rho_{\mathrm{e}}},\frac{2\pi j}{\rho_{\mathrm{a}}}) = \sqrt{\rho_{\mathrm{e}}\rho_{\mathrm{a}}}E_{\rho_{\mathrm{e}}\times\rho_{\mathrm{a}}}(i,j),~~~~\forall (i,j)\in[\rho_{\mathrm{e}}]\times[\rho_{\mathrm{a}}].
\end{equation}
 Since $E_{\rho_{\mathrm{e}}\times\rho_{\mathrm{a}}}(i,j)=0~~\forall (i,j)\in[\rho_{\mathrm{e}}]\times[\rho_{\mathrm{a}}]\backslash(0,0)$, from \eqref{eqn:sampledFT_1}, we have
\begin{equation}\label{eqn:FT_zeros}
    \Gamma(\frac{2\pi i}{\rho_{\mathrm{e}}},\frac{2\pi j}{\rho_{\mathrm{a}}}) = 0,~~~~\forall (i,j)\in[\rho_{\mathrm{e}}]\times[\rho_{\mathrm{a}}]\backslash(0,0).
\end{equation}
Next, we observe that $\mathbf{N}_{\Omega}^{\mathrm{Nyq}}$ defined in Lemma \ref{lemma1} is an extended version of $\mathbf{1}_{\rho_{\mathrm{e}}\times\rho_{\mathrm{a}}}$ obtained by appending $\rho_{\mathrm{e}}(N_{\mathrm{e}}-1)$ zeros at the end of each row and then appending $\rho_{\mathrm{a}}(N_{\mathrm{a}}-1)$ zeros at the end of each column. Now, we use the fact that the 2D discrete-space Fourier transform $\Gamma(\omega_1,\omega_2)$ of the zero-padded signal remains the same, however, its 2D-DFT will be a densely sampled version of $\Gamma(\omega_1,\omega_2)$ \cite[Ch. 5]{jain1989fundamentals}. 
Hence, the PSF \eqref{eqn:psf} of $\mathbf{N}_{\Omega}^{\mathrm{Nyq}}$ which is a zero-padded version of $\mathbf{1}_{\rho_{\mathrm{e}}\times\rho_{\mathrm{a}}}$ can be written as
\begin{equation}\label{eqn:psf_sampled_FT}
    \mathrm{PSF}(i,j) = \frac{1}{M}\Gamma^{*}(\frac{2\pi i}{N},\frac{2\pi j}{N}),~~~~ (i,j)\in[N]\times[N].
\end{equation}
Finally, from  $N=\rho_{\mathrm{e}}N_{\mathrm{e}}$, $N=\rho_{\mathrm{a}}N_{\mathrm{a}}$ and \eqref{eqn:FT_zeros}, we observe that $\mathrm{PSF}(i,j)=0$ at the desired locations defined by $\mathcal{T}$ in \eqref{eqn:eff_coherence_2}. Hence, for $\mathbf{N}_{\Omega}^{\mathrm{Nyq}}$, the coherence defined in \eqref{eqn:eff_coherence_2} is equal to zero. Therefore, the sampling set $\Omega$ given in the lemma is optimal which concludes the proof.
\ifCLASSOPTIONcaptionsoff
  \newpage
\fi

\bibliography{References}
\bibliographystyle{ieeetr}

\end{document}